\newcommand{\Lo}{{\cal L}}
\newcommand{\ioi}{\Leftrightarrow}
\newcommand{\der}[1]{\ensuremath{\stackrel{#1}{\longrightarrow}}}
\newcommand{\lt}{\sqsubseteq}
\newcommand{\Proc}{{\bf P}}
\newcommand\lb {[\![}
\newcommand\rb{]\!]}
\newcommand{\putaway}[1]{}
\newcommand{\must}[1]{[ #1 ]}
\newcommand{\may}[1]{\langle #1 \rangle}
\newcommand{\sem}[1]{\relax\ifmmode \lb #1 \rb \else $\lb #1 \rb$ \fi}
\newcommand{\calC}{\mathcal{C}}
\newcommand{\calM}{\mathcal{M}}
\newcommand{\com}{\mathop{\circ}}
\newcommand{\ccsim}{\mathbin{\lesssim_{cc}}}
\newcommand{\cclt}{\ccsim}
\newcommand{\tbi}[1]{\mathcal{T}(#1)}
\newcommand{\tabi}[1]{\mathcal{T}^{+}_{c^l}(#1)}
\newcommand{\qed}{~\Square}
\newtheorem{definition}{Definition}
\newtheorem{theorem}{Theorem}
\newtheorem{proposition}{Proposition}
\newtheorem{corollary}{Corollary}
\newtheorem{lemma}{Lemma}
\newtheorem{remark}{Remark}
\newenvironment{discussion}{\par\addvspace{\bigskipamount}
\noindent{\textbf{Discussion}}\ }{\par\addvspace{\bigskipamount}}
\newenvironment{proof}{\par\addvspace{\bigskipamount}
\noindent\textit{\textbf{Proof.}}\ }{\par\addvspace{\bigskipamount}}
\title{Graphical representation of covariant-contravariant modal
formulae} 
\author{Luca Aceto\quad Anna Ing\'olfsd\'ottir
\institute{ICE-TCS, School of Computer Science\\ Reykjavik
University\thanks{Research supported by the project `Processes and
Modal Logics' (project nr.~100048021) of the Icelandic Research Fund,
and the Abel Extraordinary Chair programme within the NILS Mobility
Project.}\\ Iceland} \and Ignacio F\'abregas \quad David de Frutos
Escrig\quad Miguel Palomino \institute{Departamento de Sistemas
Inform\'aticos y Computaci\'on\\ Universidad Complutense de
Madrid\thanks{Research supported by Spanish projects DESAFIOS10
TIN2009-14599-C03-01, TESIS TIN2009-14321-C02-01 and PROMETIDOS
S2009/TIC-1465}\\ Spain} }
\begin{document}
\maketitle

\begin{abstract}
Covariant-contravariant simulation is a combination of standard
(covariant) simulation, its contravariant counterpart and
bisimulation. We have previously studied its logical characterization
by means of the covariant-contravariant modal logic. Moreover, we have
investigated the relationships between this model and that of modal
transition systems, where two kinds of transitions (the so-called may
and must transitions) were combined in order to obtain a simple
framework to express a notion of refinement over state-transition
models.  In a classic paper, Boudol and Larsen established a precise
connection between the \textit{graphical} approach, by means of modal
transition systems, and the \textit{logical} approach, based on Hennessy-Milner
logic without negation, to system specification. They obtained a (graphical) representation theorem proving that a formula can be represented by a term if, and only if, it is consistent and prime. We show in this paper that the formulae from the covariant-contravariant modal logic that admit a ``graphical'' representation by means of processes, modulo the covariant-contravariant simulation preorder, are also the consistent and prime ones.
In order to obtain the desired graphical representation result, we first restrict ourselves
to the case of covariant-contravariant systems without bivariant
actions. Bivariant actions can be incorporated later by means of an
encoding that splits each bivariant action into its covariant and its
contravariant parts.
\end{abstract}

\section{Introduction}

{\em Modal transition systems} (MTSs) were introduced
in~\cite{Larsen89,LarsenT88} as a model of reactive computation based
on states and transitions that naturally supports a notion of {\em
refinement}. This is connected with the use of Hennessy-Milner Logic
without negation as a specification language: a specification
describes the collection of (good) properties that any implementation
has to fulfil. More generally, a process $p$ is considered to be
better than $q$ if the set of formulae satisfied by $q$
is included in the set of formulae satisfied by $p$. The tight connections
between these two ways of expressing the notions of specification and
refinement were studied in~\cite{BoudolL1992}. There the authors
talked about ``graphical'' representation (by means of one or several
MTSs) of logical specifications, and completely characterized the
collection of logical specification that can be ``graphically
represented''. These are the so-called prime, consistent formulae.

There are two types of modal operators in Hennessy-Milner Logic: $\langle
a\rangle$ and $[a]$, for each action $a$. Intuitively, a formula
$\langle a\rangle\varphi$ indicates that it must be possible to execute
$a$ and reach a state that satisfies $\varphi$, while $[a]\varphi$ imposes
that this will happen after any execution of $a$ from the current
state. It is well known that these two operators reflect the duality
$\exists$-$\forall$, so that any process satisfying a $\langle
a\rangle\varphi$ formula {\em must} include some $a$-labelled
transition reaching a state satisfying $\varphi$, whereas the
constraint expressed by a $[a]\varphi$ formula is better understood in
a negative way: a process satisfying it {\em may not} contain an
$a$-labelled transition reaching a state that does not satisfy $\varphi$. In
particular, the formula $[a]\bot$ indicates that a process cannot
execute $a$ in its initial state, and therefore, using these formulae,
we can limit the set of actions offered at any state. 

In order to reflect these two kinds of constraints at the
``operational'' level, MTSs contain two kinds of transitions: the {\em
may} transitions and the {\em must} transitions. Then we can use MTSs
both as specifications or as implementations, and the notion of
refinement imposes that, in order to implement correctly a
specification, an implementation should exhibit all the \textit{must}
transitions in the MTS that describes the specification and may not
include any transition that is not allowed by the specification: we
cannot add any new \textit{may} transition, although those in the
specification could either disappear, be preserved or turned into
\textit{must} transitions. The relation between \textit{may} and
\textit{must} is reflected in the formal definition of MTSs by
requiring that each must transition is also a may transition.

The conditions defining the notion of refinement between MTSs
obviously resemble those defining simulation and bisimulation. For may
transitions we have a contravariant simulation condition, expressing
the fact that no new (non-allowed) \textit{may} transition can appear
when refining a specification. Since we impose that \textit{must}
transitions induce the corresponding \textit{may} transitions, we
could think that they are related in a ``bisimulation-like'' style.
However, this is not the case since the contravariant simulation
condition imposed on the may part can be covered by a \textit{may}
transition without \textit{must} counterpart. In fact, this is crucial
in order to capture the principle that a \textit{may} transition can
be refined by a \textit{must} transition.

Some of the authors of this paper thought that a more direct
combination of simulation and bisimulation conditions could capture in
a more flexible way all the ideas on which the specification of
systems by means of modal systems and modal logics is based, and we
looked for the clearest and most general framework to express those
modal constraints. We found that covariant-contravariant systems
(sometimes abbreviated to cc-systems) are a possible answer to this
quest, combining pure (covariant) simulation, its contravariant
counterpart and bisimulation.

We started the study of {\em covariant-contravariant simulation}
in~\cite{FabregasFP09}, and the modal logic characterizing it was
presented in~\cite{FabregasEtAl10-logics}. (In what follows, we refer
to this logic as cc-modal logic.) In the most general case, we
consider a partition of the set of actions into three sets: the
collection of covariant actions, that of contravariant actions, and
the set of bivariant actions. Intuitively, one may think of the
covariant actions as being under the control of the specification LTS,
and transitions with such actions as their label should be simulated
by any correct implementation of the specification. On the other hand,
the contravariant actions may be considered as being under the control
of the implementation (or of the environment) and transitions with
such actions as their label should be simulated by the
specification. The bivariant actions are treated as in the classic
notion of bisimulation.

We will see in this paper that, as in the MTS setting, the consistent and prime formulae from the cc-modal logic are exactly those that admit a ``graphical'' representation by means of processes modulo the covariant-contravariant simulation preorder. Moreover, each formula in the cc-modal logic can be represented ``graphically'' by a (possibly empty) finite set of
processes.

The proofs of these representation results are inspired by the developments in~~\cite{BoudolL1992}. There are, however, subtle differences
because, in covariant-contravariant systems, each action has a single
modality (covariant, contravariant, bivariant), while in MTSs we can
combine both \textit{may} and \textit{must} transitions.

In fact, in order to obtain the desired graphical representation, for
technical reasons we first restrict ourselves to the case of
covariant-contravariant systems without bivariant actions. The reason
that justifies this constraint is that bivariant actions cannot be
approximated in a non-trivial way (either we have one of them as
itself, or we do not have it at all). Instead, covariant and
contravariant actions behave in a more flexible way and we can obtain
the desired characterization result by following the lead of the work
done for MTSs.

Then we observe that bivariant actions can be seen as the combination
of a covariant and a contravariant action. In fact, this also
corresponds with the idea used in~\cite{AcetoEtAl11} when relating
MTSs and cc-systems. Indeed, the constraint imposed on \textit{must}
transitions in MTSs, where they should always be accompanied by their
\textit{may} counterparts, tells us somehow that they have a
``nearly'' bivariant behaviour. (To be more precise, they are first
covariant, but they are also ``semi''-contravariant because when
comparing two processes $p$ and $q$, any \textit{must} transition in
$q$ should fit with either a corresponding \textit{must} transition in
$p$, or at least with a \textit{may} transition there.)

We could say that the very recent development of the notion of {\em
  partial bisimulation} in the setting of labelled transition systems
  (LTSs) presented in~\cite{Baetenetal} has completed the spectrum of
  modal simulations. Partial bisimulation combines plain
  bisimulation~\cite{Milner89,Park81} and simulation, also by means of
  a partition of the set of actions. For the actions in the
  distinguished set $B$ we have bisimulation-like conditions, while
  for the others we only impose simulation. Note that, instead,
  \textit{may} transitions in MTSs corresponded to contravariant
  simulation conditions, and therefore, partial bisimulation can be
  seen as a dual of MTSs, and covariant-contravariant systems
  (cc-systems) as a unifying framework where we can combine the
  refinement ideas in the theory of MTSs with the explicit
  consideration of the constraints imposed by the environment, which
  is possible when partial bisimulation is used. Once we know that the
  formulae from the modal logic for cc-systems also afford a graphical
  representation, we will be able to integrate the logical formulae
  into the development of systems using any of the models discussed
  above.

The remainder of the paper is organized as
follows. Section~\ref{background} is devoted to the necessary
background on covariant-contravariant simulations, whereas in
Section~\ref{sec:logic} we summarize the results on
covariant-contravariant modal formulae. In Section~\ref{sec:graph} we
develop the study of the graphical representation of cc-modal formulae
for processes without bivariant actions. Afterwards, in
Section~\ref{sec:bivar}, we show how we can work with cc-systems with
bivariant actions. Finally, Section~\ref{sec:future} concludes the
paper and describes some future research that we plan to pursue.

\section{Covariant-contravariant systems}\label{background}

We start the technical part of the paper by defining the
covariant-contravariant simulation semantics for processes. Our
semantics is defined over {\sl Labelled Transition Systems} (LTS)
$S=(\Proc,A,\der{})$, where $\Proc$ is a set of process states, $A$ is
a set of actions and $\der{}\subseteq\Proc\times A\times\Proc$ is a
transition relation on processes. We follow the standard practice and
write $p\der{a}q$ instead of $(p,a,q)\in\der{}$.  Because of the
covariant-contravariant view, we assume that $A$ is partitioned into
$A^l$ and $A^r$, expressed as $A=A^l\uplus A^r$. As we have already
mentioned in the introduction, we will delay the consideration of the
general case where we have also bivariant actions in a third class
$A^\mathit{bi}$ until Section~\ref{sec:bivar}.

Covariant-contravariant simulation can now be defined as follows:

\begin{definition}\label{Def:CCsim}
  Let $S=(\Proc,A^l\uplus A^r,\der{})$ be an LTS. A {\em
    covariant-contravariant simulation} over $S$ is a relation
  ${\mathrel{R}}\subseteq{\Proc\times\Proc}$ such that, whenever $p,q\in\Proc$ and
  $p \mathrel{R} q$, we have:
\begin{itemize}
\item For all $a\in A^r$ and all $p\der{a} p'$, there
 exists some $q\der{a}q'$ with $p' \mathrel{R} q'$.
\item For all $a\in A^l$ and all $q\der{a}q'$, there
 exists some $p\der{a}p'$ with $p' \mathrel{R} q'$.
\end{itemize}
We will write $p\ccsim q$ if there exists a covariant-contravariant simulation
$R$ such that $p \mathrel{R} q$.
\end{definition}

\begin{remark}
\emph{Note that we call the actions in $A^r$ like that, because for those there is a ``plain simulation'' from left to right; whereas for the actions in $A^l$ there is an ``anti-simulation'' from right to left.}
\end{remark}

It is well known that the relation $\ccsim$ is a preorder.

In this study we will be mainly concerned with ``finite'' properties of
systems, which will be either captured by (finite) logic formulae, or by
finite processes that can be described by means of process terms.

\begin{definition}
  Assume that $A=A^l\uplus A^r$. Then the collection of {\em process terms}, ranged over by
  $p,q$ etc. is given by the following syntax:
\[
p::=0\mid \omega\mid a.p\mid p+p,
\]
where $a\in A$. We denote  the set of process terms by $\mathcal{P}$.

The size of a process term is its length in symbols.
\end{definition}

We note that our set $\mathcal{P}$ of process terms is basically the
set of $BCCSP$ terms introduced in~\cite{VanGlabbeek01}. The only
addition to the signature of BCCSP is the constant $\omega$, which
will be used to denote the least LTS modulo $\ccsim$. However, we
assume a classification of the actions in two (disjoint) sets,
although this is not reflected in the syntactic structure of the
terms. Even if $\mathcal{P}$ only contains finite terms, by means of
$\omega$ we will obtain the full contravariant process which can
execute any action at any time.

In \cite{FabregasFP09,FabregasEtAl10-sos,FabregasEtAl10-logics} we
used a more general definition for covariant-contravariant simulations
which includes also bivariant actions, but since in the presence of
these bivariant actions some technical problems appear (in particular
the process $\omega$ will not be the least process with respect to the
covariant-contravariant simulation preorder), we have preferred to
first develop all the results without bivariant actions and, in
Section~\ref{sec:bivar}, we will describe how they can be extended to
a setting with bivariant actions.

\begin{definition} The {\em operational semantics} of $\mathcal{P}$ is
  defined by the following rules:
\begin{itemize}
\item $\omega\der{b}\omega$ for all $b\in A^l$,

\item $a.p\der{a}p$ for all $a\in A$,

\item $p\der{a}p'$ implies $p+q\der{a}p'$,

\item $q\der{a}q'$ implies $p+q\der{a}q'$.
\end{itemize}
\end{definition}
Observe that if $p\neq \omega$ and $p\der{a}p'$, then the size of $p'$
is smaller than the size of $p$.

It is clear that $\omega$ is the least possible element with respect
to the cc-simulation preorder. That is, we have $\omega\ccsim p$ for
any $p$. 

In what follows we assume that $A$ is finite.

\section{The covariant-contravariant modal logic}\label{sec:logic}

Covariant-contravariant modal logic has been introduced and studied
in~\cite{FabregasEtAl10-logics}.

\begin{definition}\label{Def:formulaeCC}
{\em Covariant-contravariant modal logic} $\Lo$ has the following syntax:
\[
\varphi ::= \bot \mid \top \mid \varphi\land\varphi \mid \varphi\lor\varphi\mid
            [b]\varphi \mid \langle a\rangle\varphi \qquad
            (a\in A^r, b\in A^l).
\]
The operators $\bot$, $\top$, $\land$ and $\lor$ have the standard
meaning whereas the semantics for the modal operators is defined as
follows:
\begin{itemize}
\item [] $p\models [b]\varphi$ if $p'\models \varphi$ for all
 $p\der{b} p'$,
\item [] $ p\models \langle a\rangle\varphi$ if $p'\models \varphi$ for
 some $p\der{a} p'$.
\end{itemize}
We say that a formula $\varphi$ is {\sl consistent}  if there is some $p$ such that
$p\models\varphi$.

The {\em modal depth} of a formula is the maximum nesting of modal
operators in it.
\end{definition}

The covariant-contravariant logic characterizes the
covariant-contravariant simulation semantics over image-finite
processes. Before we state this result formally we introduce some
notation.  We define the set of formulae that a process $p$ satisfies
by $\Lo(p)=\{\phi\mid p\models\phi\}$ and the logical preorder
$\lt_{\Lo}$ as follows: $p\lt_{\Lo}q$ iff
$\Lo(p)\subseteq\Lo(q)$. Recall that an LTS is {\em image finite} iff
the set $\{p' \mid p\der{a} p' \}$ is finite for each process $p$ and
action $a$.

Now we have the following theorem:

\begin{theorem}[\cite{FabregasEtAl10-logics}]\label{thm:log}
  If the LTS $S$ is image finite then $\ccsim=\lt_{\Lo}$ over $S$.
\end{theorem}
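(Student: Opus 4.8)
The plan is to prove the two inclusions $\ccsim\;\subseteq\;\lt_{\Lo}$ and $\lt_{\Lo}\;\subseteq\;\ccsim$ separately, following the classical Hennessy--Milner recipe but taking care of the covariant/contravariant split of the modalities.

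For the inclusion $\ccsim\;\subseteq\;\lt_{\Lo}$, I would fix a covariant-contravariant simulation $R$ with $p\mathrel{R}q$ and show, by structural induction on the formula $\varphi$, that $p\models\varphi$ implies $q\models\varphi$. The Boolean cases ($\bot,\top,\land,\lor$) are immediate from the induction hypothesis. The interesting cases are the two modalities, and here the direction of the simulation conditions is exactly what is needed. If $p\models\langle a\rangle\varphi$ with $a\in A^r$, then $p\der{a}p'$ for some $p'\models\varphi$; since $a\in A^r$, the first clause of Definition~\ref{Def:CCsim} gives $q\der{a}q'$ with $p'\mathrel{R}q'$, so by induction $q'\models\varphi$, hence $q\models\langle a\rangle\varphi$. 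Dually, if $p\models[b]\varphi$ with $b\in A^l$, take any $q\der{b}q'$; since $b\in A^l$, the second clause gives $p\der{b}p'$ with $p'\mathrel{R}q'$, and $p\models[b]\varphi$ forces $p'\models\varphi$, so by induction $q'\models\varphi$; as $q'$ was arbitrary, $q\models[b]\varphi$. This establishes $\Lo(p)\subseteq\Lo(q)$, i.e.\ $p\lt_{\Lo}q$.

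For the converse $\lt_{\Lo}\;\subseteq\;\ccsim$, I would show that $\lt_{\Lo}$ is itself a covariant-contravariant simulation, using image-finiteness exactly as in the classical argument. Suppose $p\lt_{\Lo}q$. For the first clause, take $a\in A^r$ and $p\der{a}p'$; I must find $q\der{a}q'$ with $p'\lt_{\Lo}q'$. Arguing by contradiction, suppose no such $q'$ exists, so that for each of the finitely many $a$-successors $q_1,\dots,q_n$ of $q$ there is a formula $\psi_i$ with $p'\models\psi_i$ but $q_i\not\models\psi_i$. Then $p\models\langle a\rangle\bigwedge_{i=1}^n\psi_i$ (witnessed by $p'$; note $\bigwedge$ of finitely many formulae, with the empty conjunction being $\top$), whereas $q\not\models\langle a\rangle\bigwedge_{i=1}^n\psi_i$, contradicting $p\lt_{\Lo}q$. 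For the second clause, take $b\in A^l$ and $q\der{b}q'$; I must find $p\der{b}p'$ with $p'\lt_{\Lo}q'$. Again by contradiction, if no such $p'$ exists then for each of the finitely many $b$-successors $p_1,\dots,p_m$ of $p$ there is $\chi_j$ with $p_j\models\chi_j$ but $q'\not\models\chi_j$. Consider $[b]\bigvee_{j=1}^m\chi_j$: every $b$-successor $p_j$ of $p$ satisfies $\bigvee_j\chi_j$, so $p\models[b]\bigvee_j\chi_j$, but $q'\not\models\bigvee_j\chi_j$, so $q\not\models[b]\bigvee_j\chi_j$, again contradicting $p\lt_{\Lo}q$. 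Hence $\lt_{\Lo}$ is a cc-simulation and $p\lt_{\Lo}q$ implies $p\ccsim q$.

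The main obstacle, as usual for such characterization theorems, is the second inclusion, and specifically getting the quantifier structure of the distinguishing formulae right for the contravariant case: one needs a \emph{disjunction} over the successors of $p$ (rather than a conjunction) together with the $[b]$ modality, dualizing the familiar $\langle a\rangle$-with-conjunction pattern, and image-finiteness of $p$'s $b$-successors is what makes that finite disjunction a legitimate formula of $\Lo$. Everything else is routine bookkeeping; note in particular that the logic $\Lo$ has no negation, but it is not needed here because the two modalities already come in the matching polarities, so the distinguishing formulae can always be built positively.
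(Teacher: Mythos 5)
The paper does not prove this theorem: it is imported verbatim from the cited reference \cite{FabregasEtAl10-logics}, so there is no in-paper proof to compare against. Your argument is the standard Hennessy--Milner characterization adapted to the covariant-contravariant polarities, and it is correct: the soundness direction correctly matches $\may{a}$ ($a\in A^r$) with the left-to-right simulation clause and $\must{b}$ ($b\in A^l$) with the right-to-left clause, and the completeness direction builds the distinguishing formulae with the right quantifier shape (finite conjunction under $\may{a}$ over the $a$-successors of $q$, finite disjunction under $\must{b}$ over the $b$-successors of $p$), with image-finiteness guaranteeing these are formulae of $\Lo$. The only cosmetic gap is that in the $\must{b}$ case you should note, symmetrically to your remark about the empty conjunction being $\top$, that when $p$ has no $b$-successors the empty disjunction is $\bot$ and the distinguishing formula $\must{b}\bot$ still works, since $p\models\must{b}\bot$ vacuously while $q\der{b}q'$ refutes it.
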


Clearly the processes in $\mathcal{P}$ are image finite.

\section{Graphical representation of formulae}\label{sec:graph}

Whenever we have a (modal) logic characterizing some semantics for
processes, we could look for a single formula that characterizes
completely the behaviour of a process logically; this is a so-called
{\sl characteristic formula}. This subject has been studied by many
authors in the literature, but we will just refer here to the
book~\cite{AcetoEtAl07b} for more details and further references to
the original literature.

It is clear that, since we only allow for finite formulae without any
fixed-point operator, we can only treat ``finite'' processes, such as
those definable by our simple process algebra $\mathcal{P}$. However,
the recursive definition of the characteristic formulae in what
follows gives us immediately the framework for extending our results
to finite-state processes following standard lines.

\begin{definition}\label{def:char-form}
  A formula $\phi\in\Lo$ is a {\em characteristic formula} for a process $p$
  iff $p\models\phi$ and $\forall q. (q\models \phi\Rightarrow p\cclt
  q)$.
\end{definition}

In what follows, we write $\phi\leq\psi$ if $\{p\in P\mid
p\models\phi\}\subseteq \{p\in P\mid p\models \psi\}$. We say that
$\phi$ and $\psi$ are logically equivalent, written $\phi\equiv\psi$,
iff $\phi\leq\psi$ and $\psi\leq\phi$.

\begin{lemma}\label{lem:char-form}
The following statements hold.
\begin{enumerate}
\item A formula $\phi\in\Lo$ is a characteristic formula for a process $p$ iff\, $\forall q.(q\models \phi\ioi p\cclt q)$. \item Assume that
$\chi(p)$ and $\chi(q)$ are characteristic formulae for processes $p$ and $q$, respectively. Then, we have that
\[
p\cclt q\mbox{ iff }\chi(q)\leq\chi(p).\]
\item A characteristic formula for a process $p$ is unique up to
logical equivalence.
\end{enumerate}
\end{lemma}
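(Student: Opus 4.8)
The plan is to prove the three parts in order, since each builds on the previous one. The proofs are all short and rely essentially on unwinding the definitions together with the fact, recorded as Theorem~\ref{thm:log}, that $\cclt$ coincides with the logical preorder $\lt_\Lo$ on image-finite processes (in particular on processes in $\mathcal{P}$, and more generally wherever characteristic formulae make sense).

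For part~(1), I would argue both implications. If $\phi$ is characteristic for $p$, then by Definition~\ref{def:char-form} we have $p\models\phi$ and $\forall q.(q\models\phi\Rightarrow p\cclt q)$; it remains to establish the converse direction $p\cclt q\Rightarrow q\models\phi$. This follows because $p\models\phi$ and $p\cclt q$ imply, via Theorem~\ref{thm:log} (monotonicity of satisfaction along $\cclt$, i.e.\ $\Lo(p)\subseteq\Lo(q)$), that $q\models\phi$. Conversely, if $\forall q.(q\models\phi\ioi p\cclt q)$ holds, then taking $q:=p$ and using reflexivity of $\cclt$ gives $p\models\phi$, and the forward direction of the biconditional is exactly the defining clause of a characteristic formula. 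So part~(1) is essentially immediate from the logical characterization.

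For part~(2), suppose $\chi(p)$ and $\chi(q)$ are characteristic formulae. For the left-to-right direction, assume $p\cclt q$; to show $\chi(q)\leq\chi(p)$ I take any $r$ with $r\models\chi(q)$, which by part~(1) gives $q\cclt r$, hence $p\cclt r$ by transitivity of the preorder, and then $r\models\chi(p)$ by part~(1) applied to $\chi(p)$. For the right-to-left direction, assume $\chi(q)\leq\chi(p)$; since $q\models\chi(q)$ we get $q\models\chi(p)$, and then the defining property of $\chi(p)$ (or part~(1)) yields $p\cclt q$. Part~(3) is then a clean corollary: if $\phi$ and $\psi$ are both characteristic for $p$, apply part~(2) with $q:=p$ in both directions — $p\cclt p$ holds by reflexivity, so $\phi\leq\psi$ and $\psi\leq\phi$, i.e.\ $\phi\equiv\psi$.

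I do not expect any serious obstacle here; the only point requiring a little care is making explicit the use of Theorem~\ref{thm:log} to pass between ``$p\cclt q$'' and ``every formula satisfied by $p$ is satisfied by $q$'', since Definition~\ref{def:char-form} is phrased semantically rather than in terms of $\lt_\Lo$. One should also note that $\cclt$ being a preorder (reflexive and transitive), stated just after Definition~\ref{Def:CCsim}, is what makes the substitutions $q:=p$ and the transitivity chaining legitimate. Beyond that, the argument is a routine diagram chase through the definitions.
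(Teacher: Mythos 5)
Your proposal is correct and follows essentially the same route as the paper's own proof: part~(1) via Theorem~\ref{thm:log} and reflexivity of $\cclt$, part~(2) by chaining part~(1) with transitivity, and part~(3) as an instance of part~(2) with $q:=p$. No gaps.
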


\begin{proof}
\begin{enumerate}
\item First assume that $\phi$ is a characteristic formula for a process
  $p$.  By definition $\forall q. ( q\models \phi\Rightarrow p\cclt
  q)$ holds. We have to prove that  $\forall q. (p\cclt q\Rightarrow
  q\models \phi)$. To this end, assume that $p\cclt q$.  As
  $p\models\phi$, by Theorem~\ref{thm:log} we have that $q\models
  \phi$ and we are done.

For the converse, as  $p\cclt p$ we have  that $p\models\phi$ and the
result follows.

\item Assume that $\chi(p)$ and $\chi(q)$ are characteristic formulae
for processes $p$ and $q$, respectively. First assume that $p\cclt q$
and that $r\models\chi(q)$. By Definition~\ref{def:char-form}, $q\cclt
r$ and thus $p\cclt r$. By the previous clause of the Lemma, also
$r\models\chi(p)$. As $r$ was arbitrary, this shows that
$\chi(q)\leq\chi(p)$. Next, assume that
$\chi(q)\leq\chi(p)$. As $q\models\chi(q)$ then
$q\models\chi(p)$, and by definition of the characteristic formula,
$p\cclt q$.

\item This claim follows directly from statement 2 above.\qed
\end{enumerate}
\end{proof}

As a characteristic formula for a process $p$ is unique up to logical
equivalence, we can denote it by $\chi(p)$ unambiguously. The next
lemma tells us that $\chi(p)$ exists for each process $p\in
\mathcal{P}$.

\begin{lemma}
  The characteristic formula for a process $p\in \mathcal{P}$ can be
  obtained recursively as
\begin{eqnarray*}
\chi(p) & = & \bigwedge_{p\der{a}p', a\in A^r}\may{a} \chi(p')\land
                \bigwedge_{b\in A^l}
        \must{b}(\bigvee_{p\der{b}p'}\chi(p'))\;\textrm{, if $p\neq\omega$.} \\
\chi(\omega) & = & \top.
\end{eqnarray*}
\end{lemma}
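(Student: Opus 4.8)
The statement asserts that the recursively defined $\chi(p)$ is indeed a characteristic formula for $p\in\mathcal P$, so by Definition~\ref{def:char-form} I must show (a) $p\models\chi(p)$ and (b) for all $q$, $q\models\chi(p)$ implies $p\cclt q$. Both parts will go by induction on the size of $p$. The base case is $p=0$ (where the two conjunctions are empty, so $\chi(0)=\top$, and every process satisfies $\top$; but of course $0\cclt q$ need \emph{not} hold—wait, one must be careful) and $p=\omega$ (where $\chi(\omega)=\top$ and indeed $\omega\cclt q$ for every $q$, so the claim is immediate). The genuinely inductive case is $p\neq\omega$ with $p$ of positive size, where every $p'$ with $p\der{a}p'$ has strictly smaller size and hence, by the induction hypothesis, $\chi(p')$ is a characteristic formula for $p'$.

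For part (a), $p\models\chi(p)$: for each conjunct $\may a\chi(p')$ arising from a transition $p\der ap'$ with $a\in A^r$, the induction hypothesis gives $p'\models\chi(p')$, so $p\models\may a\chi(p')$ directly from the semantics of $\langle a\rangle$. For each conjunct $\must b\bigl(\bigvee_{p\der bp''}\chi(p'')\bigr)$ with $b\in A^l$: take any $p\der bp'$; by the induction hypothesis $p'\models\chi(p')$, and $\chi(p')$ is one of the disjuncts, so $p'\models\bigvee_{p\der bp''}\chi(p'')$; since $p'$ was an arbitrary $b$-successor, $p\models\must b(\cdots)$. Hence $p\models\chi(p)$.

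For part (b), suppose $q\models\chi(p)$; I must produce a cc-simulation witnessing $p\cclt q$. The natural candidate is $R=\{(p,q)\mid q\models\chi(p)\}$ (closed appropriately, or I can instead argue transition-by-transition and appeal to the induction hypothesis to recover $p'\cclt q'$ at the successor level). Checking the two clauses of Definition~\ref{Def:CCsim}: (i) for $a\in A^r$ and $p\der ap'$, the conjunct $\may a\chi(p')$ in $\chi(p)$ forces some $q\der aq'$ with $q'\models\chi(p')$, and by the induction hypothesis this means $p'\cclt q'$; (ii) for $b\in A^l$ and $q\der bq'$, the conjunct $\must b\bigl(\bigvee_{p\der bp''}\chi(p'')\bigr)$ forces $q'\models\bigvee_{p\der bp''}\chi(p'')$, i.e.\ $q'\models\chi(p')$ for some $p\der bp'$, and again the induction hypothesis yields $p'\cclt q'$. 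Thus all the required matching transitions exist, so $p\cclt q$.

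**Main obstacle.** The only delicate point is the case $p=0$: then $\chi(0)=\top$, which \emph{every} process satisfies, yet $0\cclt q$ is supposed to follow. This is in fact fine—when $p=0$ there are no transitions at all, so clause (i) of Definition~\ref{Def:CCsim} is vacuous, and clause (ii) fails only if some $q\der bq'$ with $b\in A^l$ cannot be matched; but matching requires $0\der bp'$, which is impossible, so one would need $0\not\cclt q$ whenever $q$ has an $A^l$-transition. The resolution is that the conjunction $\bigwedge_{b\in A^l}\must b(\bigvee_{p\der bp'}\chi(p'))$ is \emph{not} empty when $p=0$: for each $b\in A^l$ the inner disjunction $\bigvee_{0\der bp'}\chi(p')$ ranges over the empty set and thus equals $\bot$, so $\chi(0)=\bigwedge_{b\in A^l}\must b\,\bot$, which says exactly that no $A^l$-action is enabled. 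So the formula is genuinely restrictive, and the induction goes through; I would flag this empty-disjunction-equals-$\bot$ convention explicitly, since it is the one place where the recursion does something non-obvious and is precisely what makes the contravariant clause work.
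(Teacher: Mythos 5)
Your proof is correct and follows essentially the same route as the paper's: first show $p\models\chi(p)$, then show that $q\models\chi(p)$ implies $p\cclt q$ by unfolding the two kinds of conjuncts one level and applying the induction hypothesis to the derivatives, with the delicate case $p=0$ resolved exactly as in the paper via the empty disjunction, i.e.\ $\chi(0)\equiv\bigwedge_{b\in A^l}\must{b}\bot$. The only (harmless) difference is that for the second implication the paper inducts on the size of $q$ whereas you induct on the size of $p$; both work since derivatives of a term other than $\omega$ strictly decrease in size.
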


\begin{proof}
  First we prove that $p\models\chi(p)$, for each $p$. This follows by
  a simple induction on the size of $p$.

  Next we prove that, for any $q$, $q\models \chi(p)$ implies $p\cclt
  q$ by induction on the size of  $q$.

 First we note that if $p=\omega$ then $\chi(\omega)=\top$ and
  $\omega\cclt q$; hence we obtain the result. Also, for the case
  $p=0$, we have that $\chi(0)$ is equivalent to $\bigwedge_{b\in
  A^l}\must{b}\bot$. Thus if $q\models\chi(0)$, then the process $q$
  cannot perform any $b\in A^l$. This yields that $0\cclt q$.

  Now, let $p$ be a process different from $0$ and $\omega$, and
  assume that $q\models\chi(p)$. First suppose that $p\der{a}p'$ for
  some $p'$ and some $a\in A^r$. As $q\models\bigwedge_{p\der{a}p',
  a\in A^r}\may{a} \chi(p')$, this implies that there is some
  $q\der{a}q'$ with $q'\models\chi(p')$. Then, by induction, $p'\cclt
  q'$.

Next, assume that $q\der{b}q'$, for some $q'$ and $b\in A^l$. As
  $q\models\bigwedge_{b\in
  A^l}\must{b}(\bigvee_{p\der{b}p'}\chi(p'))$, we can conclude that
  $q'\models\chi(p')$, for some $p'$ with $p\der{b}p'$. Again, by induction, we conclude
  $p'\cclt q'$.\qed

\end{proof}

Next we consider the converse problem, we want to represent a formula
by a process, or at least by a finite set of
processes.
\begin{definition}\label{def:rep}
A formula $\phi$ is {\em represented by a (single) process} $p$ if
\[
\forall q\in \mathcal{P}.\;[q\models\phi\mbox{ iff }p\cclt q].
\]
A formula $\phi$ is {\em represented by a finite set} $M\subseteq
\mathcal{P}$ of processes if
\[
\forall q\in \mathcal{P}.\;[q\models\phi\mbox{ iff }\exists p\in M.\;p\cclt q].
\]
\end{definition}

It is clear that $p$ represents $\phi$ iff $\{p\}$ represents
$\phi$. Moreover, the empty set of processes represents the formula
$\bot$.

The following lemma connects the notion of ``graphical
representation'' of formulae with that of characteristic formula for
processes.

\begin{lemma}\label{lem:char1} We have the following properties:
\begin{enumerate}
\item $p$ represents  $\phi$ iff $\phi\equiv\chi(p)$.

\item \label{subset} If $M\subseteq \mathcal{P}$ is finite and $\phi$ is
a formula then
\[ M \mbox{ represents }\phi\mbox{ iff }\phi\equiv\bigvee_{p\in M}\chi(p).\]
\end{enumerate}
\end{lemma}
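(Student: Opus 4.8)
The plan is to prove the two statements in sequence, using Lemma~\ref{lem:char-form} and Theorem~\ref{thm:log} throughout, and noting that the second is essentially a ``disjunctive'' lifting of the first.

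\medskip\noindent\textbf{Proof strategy for item~1.} For the forward direction, suppose $p$ represents $\phi$, i.e.\ for all $q\in\mathcal{P}$ we have $q\models\phi$ iff $p\cclt q$. By the previous lemma $\chi(p)$ exists and, by Lemma~\ref{lem:char-form}(1), $q\models\chi(p)$ iff $p\cclt q$ as well. Hence $q\models\phi$ iff $q\models\chi(p)$ for every $q\in\mathcal{P}$, which is exactly $\phi\equiv\chi(p)$. For the converse, assume $\phi\equiv\chi(p)$. Then for every $q\in\mathcal{P}$ we have $q\models\phi$ iff $q\models\chi(p)$ iff (again by Lemma~\ref{lem:char-form}(1)) $p\cclt q$, so $p$ represents $\phi$ by Definition~\ref{def:rep}.

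\medskip\noindent\textbf{Proof strategy for item~2.} Let $M\subseteq\mathcal{P}$ be finite. First I would unfold the definition of representation by a set together with Lemma~\ref{lem:char-form}(1): $M$ represents $\phi$ means that for all $q\in\mathcal{P}$, $q\models\phi$ iff there is some $p\in M$ with $p\cclt q$, which by Lemma~\ref{lem:char-form}(1) is equivalent to: there is some $p\in M$ with $q\models\chi(p)$, i.e.\ $q\models\bigvee_{p\in M}\chi(p)$. Thus $M$ represents $\phi$ iff for all $q\in\mathcal{P}$, $q\models\phi$ iff $q\models\bigvee_{p\in M}\chi(p)$, which is precisely $\phi\equiv\bigvee_{p\in M}\chi(p)$. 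One small point to address is that when $M=\emptyset$ the empty disjunction should be read as $\bot$, matching the earlier observation that $\emptyset$ represents $\bot$; I would mention this explicitly so the statement of item~2 is unambiguous in the boundary case. Another point worth a line is that $\bigvee_{p\in M}\chi(p)$ is a genuine formula of $\Lo$ precisely because $M$ is finite, so the statement is well-formed.

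\medskip\noindent I do not expect a serious obstacle here: both parts are essentially bookkeeping on top of Lemma~\ref{lem:char-form}(1) (the characterisation $q\models\chi(p)\ioi p\cclt q$) and Theorem~\ref{thm:log}. The only mild subtlety is keeping track of the quantification over $q\in\mathcal{P}$ rather than over all processes, but since every characteristic-formula fact we invoke is stated for arbitrary $q$, restricting to $\mathcal{P}$ causes no difficulty; and the degenerate empty-set case should be dispatched by the convention on empty disjunctions.
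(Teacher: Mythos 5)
Your proposal is correct and follows essentially the same route as the paper: item~1 is unfolded directly from Definition~\ref{def:rep} and Lemma~\ref{lem:char-form}(1), and item~2 is exactly the paper's chain of equivalences $\exists p\in M.\,p\cclt q \Leftrightarrow \exists p\in M.\,q\models\chi(p) \Leftrightarrow q\models\bigvee_{p\in M}\chi(p)$, with your extra remarks on the empty disjunction and finiteness of $M$ being harmless elaborations.
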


\begin{proof}
\begin{enumerate}
\item It follows directly from the definitions of these two concepts and
Lemma \ref{lem:char-form}.

\item For any $q\in \mathcal{P}$ we proceed as follows:
\[
\exists p\in M.p\cclt q\Leftrightarrow
\exists p\in M.q\models\chi(p)\Leftrightarrow
q\models\bigvee_{p\in M}\chi(p).
\]
Now the statement of the lemma follows easily from this fact and Definition
\ref{def:rep}.\qed
\end{enumerate}
\end{proof}

We want to characterize the set of formulae that can be represented by
a finite set of processes, and in particular by a single process. For
this purpose we introduce some notions of normal form for logical
formulae.

\begin{definition}
\begin{enumerate}
\item A formula $\phi$ is in {\sl normal form} if it has the form
\[
\phi=\bigvee_{i\in I}(\bigwedge_{j\in J_i}\may{a_j^i}\phi_j^i\land
\bigwedge_{k\in K_i}\must{b_k^i}\psi_k^i).
\]
where all $\phi_j^i$ and $\psi_k^i$ are also in normal form. In
particular, $\bot$ is obtained when $I=\emptyset$ and $\top$ when
$I=\{1\}$ and $J_1=K_1=\emptyset$.

\item A formula $\psi$ is in {\sl strong normal form} if it has the form
\[
\psi=\bigvee_{i\in I}\phi_i\;,
\]
where each $\phi_i$ is in unary strong normal form.  A formula $\phi$
is in {\sl unary strong normal form} if it is $\top$ or it has the form
\[
\phi=\bigwedge_{j\in J}\may{a_j}\phi_j\land
\bigwedge_{b\in A^l}\must{b}\psi_b,
\]
where every  $\phi_j$ is in unary strong normal form and every $\psi_b$ is
in strong normal form.

\end{enumerate}
\end{definition}

We note that any unary strong normal form different from $\top$ can
equivalently be written as
\[
\phi=\bigwedge_{j\in J}\may{a_j}\phi_j\land
\bigwedge_{b\in A^l}\must{b}\bigvee_{k\in {K_b}}\psi^k_b,
\]
where every $\phi_j$ and every $\psi^k_b$ are in unary strong normal
form, thus avoiding the introduction of strong normal forms.

\begin{remark}
\emph{It is not hard to see that each unary strong normal form is
consistent. See also Theorem~\ref{thm:unarynormalform} to follow.}
\end{remark}

Clearly the characteristic formulae of processes are in unary strong
normal form. Therefore, by Lemma~\ref{lem:char1}, it is a necessary
condition for a formula to be representable by a single process that
it has an equivalent unary strong normal form. We will show that this
is also a sufficient condition for this to hold for any consistent
formula.
\begin{theorem}\label{thm:unarynormalform}
A unary strong normal form
\[
\phi=\bigwedge_{j\in J}\may{a_j}\phi_j\land \bigwedge_{b\in
  A^l}\must{b}\bigvee_{k\in K_b}\psi^k_b
\]
is represented by the process defined recursively by
\begin{eqnarray*}
\theta(\phi) & = & \sum_{j\in J}a_j.\theta(\phi_j)+\sum_{b\in
    A^l}\sum_{k\in K_b}b.\theta(\psi^k_b),\quad\textrm{if
    $\phi\neq\top$} \\
\theta(\top) & = & \omega.
\end{eqnarray*}

In particular $\phi$ is the characteristic formula for $\theta(\phi)$
(up to logical equivalence). Note that even if in the formal
expression above there is a summand for each $b\in A^l$, only those
$b$'s such that $K_b\neq\emptyset$ will finally appear as summands of
$\theta(\phi)$.
\end{theorem}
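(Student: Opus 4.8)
The goal is to show that the process $\theta(\phi)$ represents the unary strong normal form $\phi$, i.e.\ that for all $q\in\mathcal{P}$ we have $q\models\phi \iff \theta(\phi)\ccsim q$. By Lemma~\ref{lem:char1}(1) it is equivalent (and cleaner) to prove that $\phi\equiv\chi(\theta(\phi))$; by Lemma~\ref{lem:char-form}(1), showing $\forall q.(q\models\phi \iff \theta(\phi)\ccsim q)$ directly also suffices, and this is the route I would take since it avoids having to manipulate $\chi$. The proof is by induction on the structure of $\phi$ (equivalently, on the modal depth, or on the recursion depth of the definition of $\theta$). The base case $\phi=\top$ is immediate: $\theta(\top)=\omega$, $\omega\ccsim q$ for every $q$ by the remark following the operational semantics, and $q\models\top$ always holds, so both sides are universally true.

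For the inductive step, write $\phi=\bigwedge_{j\in J}\may{a_j}\phi_j\land\bigwedge_{b\in A^l}\must{b}\bigvee_{k\in K_b}\psi^k_b$ with $\phi\neq\top$, and recall that the transitions of $p:=\theta(\phi)$ are exactly $p\der{a_j}\theta(\phi_j)$ for $j\in J$ and $p\der{b}\theta(\psi^k_b)$ for $b\in A^l$, $k\in K_b$ (and no others). I would split into the two implications.

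\emph{($\Leftarrow$) Soundness.} Assume $p\ccsim q$; I must show $q\models\phi$. For each $j\in J$: since $a_j\in A^r$ and $p\der{a_j}\theta(\phi_j)$, the covariant clause of Definition~\ref{Def:CCsim} gives some $q\der{a_j}q'$ with $\theta(\phi_j)\ccsim q'$; by the induction hypothesis applied to $\phi_j$ (which is in unary strong normal form), $q'\models\phi_j$, hence $q\models\may{a_j}\phi_j$. For each $b\in A^l$ and each $q\der{b}q'$: the contravariant clause gives some $p\der{b}p'$ with $p'\ccsim q'$; by the shape of $p$, that $p'$ is $\theta(\psi^k_b)$ for some $k\in K_b$, so $\psi^k_b$ — being in strong normal form, i.e.\ a disjunction of unary strong normal forms — satisfies $q'\models\psi^k_b$ by the induction hypothesis (applied to the relevant unary disjuncts), whence $q'\models\bigvee_{k\in K_b}\psi^k_b$; as $q'$ was an arbitrary $b$-derivative, $q\models\must{b}\bigvee_{k\in K_b}\psi^k_b$. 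Conjoining, $q\models\phi$.

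\emph{($\Rightarrow$) Completeness.} Assume $q\models\phi$; I must exhibit a covariant-contravariant simulation relating $p$ to $q$. The natural candidate is $R=\{(\theta(\psi),q')\mid \psi \text{ a unary strong normal form},\ q'\models\psi\}$, or more simply the union over all relevant subformulae of the pairs forced by the two implications together with $\{(\omega,r)\mid r\in\mathcal{P}\}$; one then checks $R$ is a cc-simulation, which unwinds exactly to the two bullet points of Definition~\ref{Def:CCsim} and is discharged using $q\models\phi$ and the induction hypothesis in each subformula, much as above but in the opposite direction. Concretely: for the covariant condition we must match every $p$-transition; a transition $p\der{a_j}\theta(\phi_j)$ is matched because $q\models\may{a_j}\phi_j$ gives $q\der{a_j}q'$ with $q'\models\phi_j$, and $(\theta(\phi_j),q')\in R$; a transition $p\der{b}\theta(\psi^k_b)$ with $b\in A^l$ need not be matched covariantly. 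For the contravariant condition we must match every $q$-transition $q\der{b}q'$ with $b\in A^l$: since $q\models\must{b}\bigvee_{k\in K_b}\psi^k_b$ we get some $k$ with $q'\models\psi^k_b$; but $\psi^k_b$ is a strong normal form, so $q'$ satisfies one of its unary disjuncts $\chi$, and the corresponding $\theta(\chi)$ is among the summands $b.\theta(\cdot)$ of $p$ — here one must be slightly careful that the recursive clause of $\theta$ as written sums over $k\in K_b$ of $b.\theta(\psi^k_b)$, so I should either assume each $\psi^k_b$ is already in \emph{unary} strong normal form (as the remark after the definition of normal forms permits, by further distributing) or note that $\theta$ of a strong normal form is handled by the set-based representation of Theorem~\ref{thm:unarynormalform}'s companion for arbitrary formulae. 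Matching then follows from $(\theta(\chi),q')\in R$.

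\textbf{Main obstacle.} The technical friction — and the step I would be most careful about — is the bookkeeping around the $\must{b}\bigvee_{k\in K_b}\psi^k_b$ conjuncts: ensuring that the contravariant simulation condition is met requires that \emph{every} $b$-successor of $q$ be dominated by \emph{some} $b$-summand of $\theta(\phi)$, and this is exactly why the definition of $\theta$ creates one $b$-summand per disjunct $k\in K_b$ rather than a single one. Making this rigorous hinges on the induction hypothesis being available for formulae in (unary) strong normal form of strictly smaller modal depth, and on the observation that each such $\psi^k_b$ is consistent (the remark before the theorem), so that $\theta(\psi^k_b)$ is a well-defined process with $\psi^k_b\equiv\chi(\theta(\psi^k_b))$. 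Everything else — the base case, the covariant half, and the final assembly $\phi\equiv\chi(\theta(\phi))$ via Lemma~\ref{lem:char1}(1) — is routine once this point is pinned down.
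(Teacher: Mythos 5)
Your proposal is correct and follows essentially the same route as the paper: an induction on the modal depth of $\phi$ establishing that $\phi$ is the characteristic formula of $\theta(\phi)$, with the same case split on covariant transitions of $\theta(\phi)$ and contravariant transitions of $q$, and the same reliance on the $\psi^k_b$ being \emph{unary} strong normal forms so that $\theta$ is applicable to them. The only (harmless) deviations are that you prove the direction $\theta(\phi)\ccsim q\Rightarrow q\models\phi$ directly by unwinding the simulation, where the paper instead shows $\theta(\phi)\models\phi$ and appeals to Theorem~\ref{thm:log}, and that you package the converse direction as an explicit cc-simulation relation rather than as a plain induction.
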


\begin{proof}
  First we prove that $\theta(\phi)\models\phi$ by induction on the
  modal depth of $\phi$. If $\phi=\top$ we have that obviously
  $\theta(\phi)=\omega\models\phi=\top$. For the inductive step first
  we note that $\theta(\phi)\der{a_j}\theta(\phi_j)$ for all $j\in
  J$. By induction, $\theta(\phi_i)\models\phi_i$. Next assume that
  $\theta(\phi)\der{b}p$ for some $b\in A^l$ and some $p$. We have
  that $p=\theta(\psi^k_b)$ for some $k\in K_b$. By induction
  $\theta(\psi^k_b)\models\psi^k_b$ and therefore
  $\theta(\psi^k_b)\models\bigvee_{k\in K_b}\psi^k_b$.

  Next we prove that if $q\models\phi$ then $\theta(\phi)\cclt
  q$. Towards proving this claim, assume that $q\models\phi$. Again we proceed
  by induction on the modal depth of $\phi$.

First assume that $\theta(\phi)\der{a}p'$ for some $a\in A^r$ and
  process term $p'$. Then $a=a_j$ for some $j\in J$ and
  $p'=\theta(\phi_j)$. As $q\models\phi$, we have that $q\der{a_j}q'$ for some
  $q'$ with $q'\models\phi_j$. By induction, $\theta(\phi_j)\cclt
  q'$, as required.

  Now assume that $q\der{b}q'$ for some $b\in A^l$.  As
  $q\models\phi$ we have that $q'\models\psi^k_b$ for some $k\in K$.  Now
  $\theta(\phi)\der{b}\theta(\psi^k_b)$ and, by the induction hypothesis,
  we have $\theta(\psi^k_b)\cclt q'$, as required.

  This proves that $\phi$ is the characteristic formula for
  $\theta(\phi)$ and therefore, by Lemma \ref{lem:char1}, that
  $\theta(\phi)$ represents $\phi$.\qed
\end{proof}
Next, we will show that any formula has an equivalent strong normal
form and therefore can always be represented by a (possibly empty)
finite set of processes. To derive this result we will use several
standard equivalences between formulae.

\begin{lemma}\label{equalities}
The following statements hold.
\begin{enumerate}
\item $\land$ and $\lor$ are associative, commutative and idempotent.

\item $\land$ distributes over $\lor$, and $\lor$ distributes over $\land$.

\item $\phi\lor\top\equiv \top$, $\phi\lor\bot\equiv \phi$, $\phi\land\top\equiv \phi$, and
$\phi\land\bot\equiv \bot$.

\item $\must{b}\top\equiv \top$.

\item $\must{b}\phi\land\must{b}\psi\equiv \must{b}(\phi\land\psi)$ for
$b\in A^l$.

\item $\may{a}\phi\lor\may{a}\psi\equiv \may{a}(\phi\lor\psi)$ for
$a\in A^r$.

\end{enumerate}
\end{lemma}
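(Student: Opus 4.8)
The plan is to establish Lemma~\ref{equalities} by verifying each of the six claimed logical equivalences directly against the satisfaction relation of Definition~\ref{Def:formulaeCC}, recalling that $\phi\equiv\psi$ means $\{p\in P\mid p\models\phi\}=\{p\in P\mid p\models\psi\}$. For items (1)--(3), which concern only the propositional connectives $\land$, $\lor$, $\top$, $\bot$, I would simply observe that satisfaction of these connectives is defined in the standard Boolean way ($p\models\varphi\land\psi$ iff $p\models\varphi$ and $p\models\psi$; $p\models\varphi\lor\psi$ iff $p\models\varphi$ or $p\models\psi$; nothing satisfies $\bot$ and everything satisfies $\top$), so each equivalence reduces to a tautology of propositional logic on the family of sets $\{p\mid p\models\cdot\}$. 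These are routine and can be dispatched in a sentence.

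For the modal items I would argue pointwise, fixing an arbitrary process term $p$ and unfolding the semantics of $\must{b}$ and $\may{a}$. For (4): $p\models\must{b}\top$ holds iff $p'\models\top$ for every $p'$ with $p\der{b}p'$, which is vacuously true, so $\must{b}\top\equiv\top$. For (5): $p\models\must{b}\phi\land\must{b}\psi$ iff for all $p\der{b}p'$ we have $p'\models\phi$ and for all $p\der{b}p'$ we have $p'\models\psi$, which is exactly: for all $p\der{b}p'$, $p'\models\phi\land\psi$, i.e.\ $p\models\must{b}(\phi\land\psi)$. For (6): $p\models\may{a}\phi\lor\may{a}\psi$ iff there is $p\der{a}p'$ with $p'\models\phi$ or there is $p\der{a}p'$ with $p'\models\psi$, which is the same as: there is $p\der{a}p'$ with $p'\models\phi$ or $p'\models\psi$, i.e.\ $p\models\may{a}(\phi\lor\psi)$. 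Each of these is a direct logical manipulation of a universal (resp.\ existential) quantifier over the (finite, by image-finiteness, though finiteness is not even needed here) set of $b$- (resp.\ $a$-) successors of $p$.

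There is no genuine obstacle here; the only point that deserves a word of care is that items (5) and (6) are modalities that distribute well precisely because $\must{\cdot}$ is a universal modality (commuting with $\land$) and $\may{\cdot}$ is an existential modality (commuting with $\lor$) — the ``dual'' distributions ($\must{b}$ over $\lor$, or $\may{a}$ over $\land$) fail in general, so the lemma lists exactly the safe directions, and I would make sure the statement as used later only ever invokes these. Since all parts are immediate from the definitions, I would present the proof compactly, grouping (1)--(3) as ``standard propositional reasoning'' and writing out (4)--(6) as the short quantifier calculations above.

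\begin{proof}
Items (1)--(3) are immediate, since $\land$, $\lor$, $\top$ and $\bot$ have their standard Boolean interpretation on the sets $\{p\in P\mid p\models\cdot\}$, so each stated equivalence is a propositional tautology. For (4), $p\models\must{b}\top$ iff $p'\models\top$ for every $p'$ with $p\der{b}p'$, which always holds; hence $\must{b}\top\equiv\top$. For (5), $p\models\must{b}\phi\land\must{b}\psi$ iff $p'\models\phi$ for all $p\der{b}p'$ and $p''\models\psi$ for all $p\der{b}p''$, iff $p'\models\phi\land\psi$ for all $p\der{b}p'$, iff $p\models\must{b}(\phi\land\psi)$. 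For (6), $p\models\may{a}\phi\lor\may{a}\psi$ iff $p\der{a}p'$ with $p'\models\phi$ for some $p'$, or $p\der{a}p''$ with $p''\models\psi$ for some $p''$, iff $p\der{a}p'$ with $p'\models\phi\lor\psi$ for some $p'$, iff $p\models\may{a}(\phi\lor\psi)$.\qed
\end{proof}
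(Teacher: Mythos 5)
Your proof is correct and follows essentially the same route as the paper's: items (1)--(3) are dismissed as standard propositional facts, and (4)--(6) are verified pointwise by unfolding the semantics of $\must{b}$ and $\may{a}$ and commuting the universal (resp.\ existential) quantifier over the successors with $\land$ (resp.\ $\lor$). No discrepancies to report.
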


\begin{proof}
The first three collections of equalities are straightforward and well
known, so we omit their proofs.
\begin{itemize}
\item $\must{b}\top\equiv \top$. We have $p\models\must{b}\top$ iff
$p'\models\top$ for all $p\der{b}p'$. Therefore, the condition is
satisfied whenever $p\der{b}p'$, and it is vacuously true when
$p\stackrel{b}{\not\der{}}$.

\item $\must{b}\phi\land\must{b}\psi\equiv
\must{b}(\phi\land\psi)$. We have
$p\models(\must{b}\phi\land\must{b}\psi)$ iff $p'\models\phi$ for all
$p\der{b}p'$ and $p'\models\psi$ for all $p\der{b}p'$, iff
$p'\models(\phi\land\psi)$ for all $p\der{b}p'$, iff
$p\models\must{b}(\phi\land\psi)$.

\item $\may{a}\phi\lor\may{a}\psi\equiv \may{a}(\phi\lor\psi)$. We have
  $p\models\may{a}\phi\lor\may{a}\psi$ iff there exists $p\der{a}p'$
  such that $p'\models\phi$ or there exists $p\der{a}p''$ such that
  $p''\models\psi$, that is, iff there exists some $p\der{a}p'_0$ such
  that $p'_0\models\phi$ or $p'_0\models\psi$. This holds iff
  $p\models\may{a}(\phi\lor\psi)$. \qed
\end{itemize}
\end{proof}
\begin{lemma}\label{lem:normalform}
Every formula $\phi$ has an equivalent strong normal form with no
larger modal depth.
\end{lemma}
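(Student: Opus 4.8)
The plan is to proceed by structural induction on $\phi$, pushing the normal form inwards through each connective and then using the equivalences of Lemma~\ref{equalities} to collapse the result back into the required shape. The base cases $\phi=\bot$ and $\phi=\top$ are immediate: $\bot$ is the empty disjunction ($I=\emptyset$) and $\top$ is already in unary strong normal form. For the inductive step I would treat the six syntactic forms of $\Lo$ in turn.

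For the modal cases, suppose first $\phi=\may{a}\phi'$ with $a\in A^r$, and by induction let $\phi'\equiv\bigvee_{i\in I}\phi_i'$ be a strong normal form with each $\phi_i'$ in unary strong normal form. Since $\may{a}$ distributes over $\lor$ (clause~6 of Lemma~\ref{equalities}, iterated), $\may{a}\phi'\equiv\bigvee_{i\in I}\may{a}\phi_i'$; but the right-hand side is not yet of the required form because a unary strong normal form must list \emph{all} of $A^l$ under $\must{}$. Here I would use $\must{b}\top\equiv\top$ (clause~4) to pad each disjunct $\may{a}\phi_i'$ into $\may{a}\phi_i'\land\bigwedge_{b\in A^l}\must{b}\top$, which is now a genuine unary strong normal form with $J=\{a\}$ and every $\psi_b=\top$. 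For $\phi=\must{b}\phi'$ with $b\in A^l$, by induction $\phi'\equiv\bigvee_{i\in I}\phi_i'$ is in strong normal form; then $\phi\equiv\must{b}(\bigvee_{i\in I}\phi_i')$, which I reshape into the unary strong normal form $\bigwedge_{j\in J}\may{a_j}\phi_j\land\bigwedge_{c\in A^l}\must{c}\psi_c$ with $J=\emptyset$, $\psi_b=\phi'$ (already a strong normal form), and $\psi_c=\top$ for $c\neq b$ — again padding with clause~4. If $I=\emptyset$ here, i.e. $\phi'\equiv\bot$, then $\must{b}\bot$ is still a perfectly good unary strong normal form (with $\psi_b=\bot$, the empty disjunction), so no special case is needed; note incidentally that the modal depth does not increase since the new operator is applied to a subformula whose depth is no larger than $\phi'$'s.

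For $\phi=\phi_1\lor\phi_2$, by induction each $\phi_i\equiv\bigvee$ of unary strong normal forms, and concatenating the two disjunctions (using associativity and commutativity of $\lor$) immediately yields a strong normal form for $\phi$. The genuinely substantial case is $\phi=\phi_1\land\phi_2$. By induction write $\phi_1\equiv\bigvee_{i\in I}\alpha_i$ and $\phi_2\equiv\bigvee_{l\in L}\beta_l$ with all $\alpha_i,\beta_l$ unary strong normal forms; distributing $\land$ over $\lor$ (clause~2) gives $\phi\equiv\bigvee_{i\in I,\,l\in L}(\alpha_i\land\beta_l)$, so it suffices to show that the conjunction of two unary strong normal forms is equivalent to a unary strong normal form of no larger modal depth. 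If either conjunct is $\top$ the claim is trivial (clause~3); otherwise write $\alpha=\bigwedge_{j\in J}\may{a_j}\phi_j\land\bigwedge_{b\in A^l}\must{b}\psi_b$ and $\beta=\bigwedge_{j'\in J'}\may{a_{j'}}\phi_{j'}'\land\bigwedge_{b\in A^l}\must{b}\psi_b'$. Reassociating the conjunction, the $\may{}$-part becomes $\bigwedge_{j\in J\uplus J'}\may{}(\cdots)$ — already a conjunction of diamonds over unary strong normal forms — and for each $b\in A^l$ the two box summands combine via $\must{b}\psi_b\land\must{b}\psi_b'\equiv\must{b}(\psi_b\land\psi_b')$ (clause~5). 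It then remains to turn each $\psi_b\land\psi_b'$, a conjunction of two strong normal forms, back into a strong normal form: applying distributivity once more reduces this to conjunctions of pairs of unary strong normal forms at strictly smaller modal depth, so a secondary induction on modal depth (nested inside, or done first as a self-standing claim about conjunctions of unary strong normal forms) closes the argument. The main obstacle is precisely this bookkeeping for $\land$: one has to be careful that the distribute–recombine–recurse loop terminates, which it does because every recursive call is on subformulae of strictly smaller modal depth, and that no step increases modal depth, which follows because clauses~4--6 of Lemma~\ref{equalities} never add a modal layer and clauses~1--3 are purely propositional.
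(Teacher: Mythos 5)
Your proof is correct, and every manipulation you invoke is indeed licensed by Lemma~\ref{equalities}; the argument is, however, organized along a genuinely different induction scheme than the paper's. The paper proceeds in two phases: it first shows, by induction on modal depth using only the propositional clauses 1--3, that every formula is equivalent to a \emph{normal form} (a disjunction of conjunctions of modalities, recursively), and then runs a single induction on modal depth that converts a normal form into a strong normal form --- merging the boxes for each $b\in A^l$ via clauses 4--5, applying the induction hypothesis to the strictly shallower subformulae, pulling disjunctions out of diamonds with clause 6, and finishing with distributivity. You skip the intermediate normal form entirely and do a structural induction on $\phi$, which concentrates all the real work in the $\land$ case; there you isolate as the crux the auxiliary claim that the conjunction of two unary strong normal forms is again equivalent to a unary strong normal form of no larger modal depth, proved by an inner induction on modal depth via the same merge-boxes/distribute/recurse loop. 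The two routes use an identical algebraic toolkit and the same termination measure (every recursive call drops the modal depth), so neither is more elementary; what your decomposition buys is a modular and independently useful fact --- unary strong normal forms, i.e.\ characteristic formulae of single processes, are closed under conjunction, which is the logical counterpart of processes having meets with respect to $\ccsim$ --- while the paper's buys a uniform one-pass treatment of all connectives once the propositional preprocessing is done. Your handling of the edge cases (padding with $\must{b}\top$ so that the box conjunction ranges over all of $A^l$, reading $\bot$ as the empty disjunction, and checking that clauses 4--6 never add a modal layer) is also correct and matches the care taken in the paper.
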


\begin{proof}
  First we prove by induction on the modal depth, using $1$-$3$ of
  Lemma~\ref{equalities}, that $\phi$ has an equivalent normal form
  with the same modal depth. To prove the main statement we can
  therefore assume that $\phi$ is in normal form. We proceed by
  induction on the modal depth $md(\phi)$.  The base case $md(\phi)=0$
  ($\phi \equiv\bot$ and $\phi \equiv \top$) follows immediately.

 Next let us assume that
\[
\phi=\bigvee_{i\in I}(\bigwedge_{j\in J_i}\may{a_j^i}\phi_j^i\land
\bigwedge_{k\in K_i}\must{b_k^i}\psi_k^i).
\]
By Lemma~\ref{equalities}, using $4$ and $5$ and the standard laws described in $1$-$3$, $\phi$
can be rewritten into an equivalent formula of the form
\[
\phi=\bigvee_{i\in I}(\bigwedge_{j\in J_i}\may{a_j^i}\phi_j^i\land
\bigwedge_{b\in A^l}
\must{b}\psi_b^i)
\]
where $md(\psi_b^i)\leq \sup\{md(\psi_k^i)\mid k\in K_i\}$ (we note that
some of the $\must{b}\psi^i_b$s may have the form $\must{b}\top$,
which is equivalent to $\top$). Therefore, by the induction
hypothesis, we may assume that $\phi_j^i$ and $\psi_b^i$ are in strong
normal form.  Next we use Lemma~\ref{equalities}.$6$ to remove all the
occurrences of $\lor$ that are guarded by $\may{a}$, for some $a\in
A^r$ in each $\bigwedge_{j\in J_i}\may{a_j^i}\phi_j^i$. The result for
each $i$ is of the form $\bigwedge_{j\in J_i}(\bigvee_{l\in
  L_j}\may{a_j^i}\phi_j^{l,i})$, where each $\phi_j^{l,i}$ is in a
unary strong normal form. By repeated use of distributivity, the whole
formula can be rewritten as
\[
\phi=\bigvee_{r\in R}(\bigwedge_{s\in S_r}\may{a^r_s}\alpha^r_s\land
\bigwedge_{b\in A^l}\must{b}\bigvee_{t\in T_b^r}\beta_{b,t}^r)
\]
where each $\alpha_r^s$ and $\beta^{b,t}_r$ is a unary strong normal
form.  Finally we note that the operations described above do not
increase the modal depth.  \qed
\end{proof}

Now we will relate our result to the one in Boudol and Larsen's
paper~\cite{BoudolL1992}.

\begin{definition}
A formula $\phi$ is {\em prime} if the following holds:
\[
\forall \phi_1,\phi_2\in\Lo.~\phi\leq\phi_1\lor\phi_2
\mbox{ implies }\phi\leq\phi_1\mbox{ or }\phi\leq\phi_2.
\]
\end{definition}

\begin{theorem}
A formula  $\phi$ can always be represented by a finite set of
processes. It can be represented by a single process if and only if it
is consistent and prime.
\end{theorem}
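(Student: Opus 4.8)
The plan is to obtain the theorem by assembling the normal-form machinery already established, with essentially no new ideas. For the first assertion I would start from Lemma~\ref{lem:normalform}, which produces an equivalent strong normal form $\bigvee_{i\in I}\phi_i$ for $\phi$, each $\phi_i$ in unary strong normal form. By Theorem~\ref{thm:unarynormalform} every $\phi_i$ is represented by the process $\theta(\phi_i)$, i.e.\ $\phi_i\equiv\chi(\theta(\phi_i))$, so $\phi\equiv\bigvee_{i\in I}\chi(\theta(\phi_i))$. The second clause of Lemma~\ref{lem:char1} then says precisely that the finite set $M=\{\theta(\phi_i)\mid i\in I\}\subseteq\mathcal{P}$ represents $\phi$; in the degenerate case $I=\emptyset$ we have $\phi\equiv\bot$ and $M=\emptyset$ does the job. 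This settles the first sentence of the statement.

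For the ``only if'' part of the second sentence, suppose $\phi$ is represented by a single process $p$. By the first clause of Lemma~\ref{lem:char1}, $\phi\equiv\chi(p)$, and since $p\models\chi(p)$ the formula $\phi$ is consistent. To see that $\phi$ is prime, assume $\phi\leq\phi_1\lor\phi_2$ with $\phi_1,\phi_2\in\Lo$. From $p\models\phi$ we get $p\models\phi_1$ or $p\models\phi_2$; say $p\models\phi_1$. For any $q\in\mathcal{P}$ with $q\models\phi$ we then have $q\models\chi(p)$, hence $p\cclt q$; and since $\ccsim$ coincides with $\lt_{\Lo}$ on the image-finite processes of $\mathcal{P}$ (Theorem~\ref{thm:log}), $p\models\phi_1$ forces $q\models\phi_1$. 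Thus $\phi\leq\phi_1$, and $\phi$ is prime.

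For the ``if'' part, let $\phi$ be consistent and prime, and fix an equivalent strong normal form $\bigvee_{i\in I}\phi_i$ given by Lemma~\ref{lem:normalform}. Consistency of $\phi$ forces $I\neq\emptyset$, since the strong normal form with empty index set is $\bot$. A routine induction on $|I|$, using primeness of $\phi$, shows that $\phi\leq\bigvee_{i\in I}\phi_i$ entails $\phi\leq\phi_{i_0}$ for some $i_0\in I$; conversely $\phi_{i_0}\leq\bigvee_{i\in I}\phi_i\equiv\phi$, so $\phi\equiv\phi_{i_0}$. As $\phi_{i_0}$ is in unary strong normal form, Theorem~\ref{thm:unarynormalform} exhibits $\theta(\phi_{i_0})$ as a representing process for $\phi_{i_0}$, hence for $\phi$ (representation depends only on the $\equiv$-class, by Definition~\ref{def:rep}). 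Together with the first part this completes the argument.

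Since Lemma~\ref{lem:normalform} and Theorem~\ref{thm:unarynormalform} carry the real weight, I do not expect a genuine obstacle here; the points that need a moment of care are the two elementary observations in the ``if'' direction — that a consistent formula cannot collapse to the empty disjunction (equivalently, that each unary strong normal form is consistent, as noted in the Remark preceding Theorem~\ref{thm:unarynormalform} and witnessed by $\theta(\phi_i)$), and that primeness propagates from binary to arbitrary non-empty finite disjunctions.
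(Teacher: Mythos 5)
Your proposal is correct and follows essentially the same route as the paper: Lemma~\ref{lem:normalform} plus Theorem~\ref{thm:unarynormalform} plus Lemma~\ref{lem:char1} for the first claim, the characteristic-formula argument via Theorem~\ref{thm:log} for primeness, and selection of a single disjunct via primeness for the converse. If anything, you are slightly more careful than the paper, which leaves implicit both the use of consistency to exclude the empty disjunction and the extension of primeness from binary to $n$-ary disjunctions.
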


\begin{proof}
  By Lemma \ref{lem:normalform}, $\phi\equiv\phi_1\lor\ldots\lor\phi_n$
  where each $\phi_i$, $1\leq i\leq n$, is in unary strong normal form.
  By Theorem \ref{thm:unarynormalform}, $\phi_i\equiv\chi(p_i)$ for
  some $p_i$ for each $1\leq i\leq n$, and therefore
  $\phi\equiv\chi(p_1)\lor\ldots\lor\chi(p_n)$. The first statement
  now follows from Lemma~\ref{lem:char1}.\ref{subset}.

Towards proving the second statement, first assume that
$\phi\equiv\chi(p_1)\lor\ldots\lor\chi(p_n)$ is prime.  This implies
that $\phi\leq\chi(p_i)\leq\phi$, for some $i\in \{1,\ldots, n\}$,
which in turn implies that $\phi \equiv \chi(p_i)$.

Next assume that $\phi$ is represented by some process $p$ or
equivalently that $\phi\equiv\chi(p)$. Now assume that
$\chi(p)\leq\phi_1\lor\phi_2$. As $p\models\chi(p)$, this implies that
$p\models\phi_1\lor\phi_2$ or equivalently that either
$p\models\phi_1$ or $p\models\phi_2$. Without loss of generality, we
can assume that $p\models\phi_1$. Now assume that $r\models\chi(p)$.
Then $p\cclt r$ and by Theorem~\ref{thm:log} this implies that
$r\models\phi_1$.  Since $r$ was arbitrary, this proves that
$\phi\equiv\chi(p)\leq\phi_1$. Hence $\phi$ is prime, which was to be shown.
\hfill$\Box$
\end{proof}

\section{Considering bivariant actions}\label{sec:bivar}

Originally~\cite{FabregasFP09,FabregasEtAl10-sos,FabregasEtAl10-logics},
the theory of covariant-contravariant semantics also considered
bivariant actions in $A^\mathit{bi}$, so that we had a partition of
$A$ into $\{A^r,A^l,A^\mathit{bi}\}$ (called the signature of the
LTS), and the definition of covariant-contravariant simulations
imposed the following two conditions:

\begin{itemize}
\item For all $a\in A^r\cup A^\mathit{bi}$ and all $p\der{a} p'$, there
 exists some $q\der{a}q'$ with $p' \mathrel{R} q'$.

\item For all $a\in A^l\cup A^\mathit{bi}$ and all $q\der{a}q'$, there
 exists some $p\der{a}p'$ with $p' \mathrel{R} q'$.
\end{itemize}

When we have in our signature bivariant actions we cannot get directly the graphical representation results that we have presented in Section~\ref{sec:graph}. This is so because bivariant actions cannot be under approximated, as a consequence of the well known result that bisimilarity is an equivalence relation and not a plain preorder. In order to maintain our results we mandatorily need that notion of approximation. We obtain it by decomposing each bivariant action $a$ into a pair of actions, one covariant, $a^r$, and another contravariant, $a^l$. Technically, we define an embedding of the set of processes over an arbitrary signature $A=\{A^r,A^l,A^\mathit{bi}\}$ into that corresponding to a new signature $\bar{A}=\{\bar{A}^r,\bar{A}^l,\emptyset\}$. The latter does not include any bivariant action, and then we can apply to it our graphical representation results, that then can be transfered to the original signature by means of the defined embedding.

In~\cite{AcetoEtAl11} we presented transformations from LTSs to Modal Transition Systems (MTSs), and vice versa, named $\calM$ and $\calC$, respectively. We proved that both preserve and reflect the covariant-contravariant logic and simulation preorder. Applying these two transformations in a row we did not obtain the identity function, but instead a transformation $\mathcal{T}_0=\calC\com\calM$ that transforms an LTS with bivariant actions into another LTS without them. Since composition preserves the good properties of $\calC$ and $\calM$, $\mathcal{T}_0$ also has these properties. 

Next we give a direct definition of $\mathcal{T}_0$.

\begin{definition}\label{def:cm}
Let $T$ be an LTS with the signature $A=\{A^r,A^l,A^\mathit{bi}\}$. The LTS $\mathcal{T}_0(T)$ with signature $\hat{A}=\{\hat{A}^r,\hat{A}^l,\emptyset\}$, where $\hat{A}^r=\{d^r\mid d\in A^r\cup A^\mathit{bi}\}$ and $\hat{A}^l=\{d^l\mid d\in A^r\cup A^l\cup A^\mathit{bi}\}$, is constructed as follows:
\begin{itemize}
\item The set of states of $\mathcal{T}_0(T)$ is the same as the one of $T$ plus a new
 state $u$.

\item For each transition $p\der{d} p'$ in $T$ , add a transition 
 $p \der{d^l} p'$ in $\mathcal{T}_0(T)$.

\item For each transition $p \der{d} p'$ in $T$ with $d\in A^r \cup A^{bi}$,
 add a transition $p\der{d^r}p'$ in $\mathcal{T}_0(T)$. 

\item For each $a\in A^r$ and state $p$, add the transition 
 $p\der{a^l} u$ to $\mathcal{T}_0(T)$, as well as transitions 
 $u \der{d^l} u$, for each action $d\in A$.   
\end{itemize}
\end{definition}

Note that each $c\in A^\mathit{bi}$ is ``encoded'' by means of a pair of new actions $(c^r, c^l)$. Moreover, as a consequence of the general definition of $\calM$, for each $a\in A^r$, together with $a^r$, which is its ``natural'' encoding an additional $a^l\in A^l$, coupled with it, is introduced. Finally, the behaviour of the ``extra'' state $u$ is defined by $\omega$.

\begin{figure}[htp]
\centering
\fbox{
\begin{minipage}{0.8\linewidth}
\phantom{aaaaaaaaaaaaaaaa\\aaaa}
\bigskip
\begin{center}
$\begin{array}{c@{\hskip 1cm}c@{\hskip 1cm}c} {\begin{array}{c@{\hskip 1.5cm}c@{\hskip 1.5cm}c} \rnode{a}{X}  & \rnode{b}{Y} & \rnode{c}{Z}
\end{array}
\Large \psset{nodesep=3pt}\everypsbox{\scriptstyle} \ncarc{->}{a}{b}\Aput{a} \ncarc{->}{b}{a}\Aput{c} \ncline{->}{c}{b}\Bput{b} }

& {\stackrel{{\calC}\mathop{\circ}{\calM}}{\longmapsto}} &

{\begin{array}{c@{\hskip 1.8cm}c@{\hskip 1.8cm}c}
\rnode{a}{X}  & \rnode{b}{Y} & \rnode{c}{Z}\\[1.8cm]
& \rnode{d}{u} &
\end{array}
\Large \psset{nodesep=3pt} \everypsbox{\scriptstyle} \ncarc{->}{a}{b}\Aput{a^r,a^l} \ncarc{->}{b}{a}\Aput{c^r,c^l} \ncline{->}{c}{b}\Bput{b^l}
\ncline{->}{a}{d}\Bput{a^l} \ncline{->}{b}{d}\Aput{a^l} \ncline{->}{c}{d}\Aput{a^l}\nccircle[angleA=180]{->}{d}{.35cm}\Bput{a^l,c^l,b^l}}
\end{array}
$
\end{center}
\bigskip \caption{The original transformation of a LTS with bivariant actions into another without them, assuming $A^r=\{a\}$,
$A^l=\{b\}$ and $A^\mathit{bi}=\{c\}$.}\label{fig:1}
\end{minipage}}
\end{figure}

Based on this transformation, we have designed a direct encoding of LTSs over a signature $A=\{A^r,A^l,A^\mathit{bi}\}$
by means of LTSs over an adequate signature $\bar{A}=\{\bar{A}^r,\bar{A}^l,\emptyset\}$. As above, for each $c\in A^\mathit{bi}$ in the original signature, we introduce a pair of (new) actions, as the following definition makes precise.

\begin{definition}\label{Def:Bi2nBi}
Let $T$ be an LTS with signature $A=\{A^r, A^l, A^\mathit{bi}\}$. The LTS
$\tbi{T}$, with signature $\bar{A}=\{\bar{A}^r, \bar{A}^l,\emptyset\}$, where
$\bar{A}^r = A^r\cup\{c^r \mid c\in A^\mathit{bi}\} $ and $\bar{A}^l
=A^l\cup\{c^l \mid c\in A^\mathit{bi}\}$, is constructed as follows:
\begin{itemize}
\item The set of states of $\tbi{T}$ is the same as that of $T$.

\item All the transitions from $T$ with label in $A^r \cup A^l$ are in $\tbi{T}$.

\item For each transition $p\der{c} p'$ in $T$ with $c\in A^\mathit{bi}$, we add
 $p\der{c^r} p'$ and $p\der{c^l} p'$ to $\tbi{T}$.
\end{itemize}
\end{definition}

The transformation above produces an LTS without bivariant actions more closely related to the original covariant-contravariant LTS than that produced by $\mathcal{T}_0$ (compare Figure~\ref{fig:2} with Figure~\ref{fig:1}). Note that the class of LTSs with signature $\bar{A}$ that satisfy that $p\der{c^r}p'$ if and only if $p\der{c^l}p'$, for all $p,p'\in\Proc$, and all $c\in A^\mathit{bi}$; is exactly the class of processes that are the representation of some LTS with signature $A$.

To translate modal formulae we have just to adopt the right modality for each action, as the
following definition makes precise.

\begin{definition}
Let us extend $\mathcal{T}$ to translate modal formulae over the modal logic for LTS over $A$ into modal formulae over the modal logic for LTS over $\bar{A}$, as follows:
\begin{itemize}
\item $\tbi{\bot} = \bot$.
\item $\tbi{\top} = \top$.
\item $\tbi{\varphi\land\psi} = \tbi{\varphi}\land \tbi{\psi}$.
\item $\tbi{\varphi\lor\psi} = \tbi{\varphi}\lor \tbi{\psi}$.
\item $\tbi{\langle a\rangle\varphi} = \langle a\rangle \tbi{\varphi}$, if $a\in A^r$.
\item $\tbi{\langle c\rangle\varphi} = \langle c^r\rangle \tbi{\varphi}$, if $c\in A^\mathit{bi}$.
\item $\tbi{[b]\varphi} = [b]\tbi{\varphi}$, if $b\in A^l$.
\item $\tbi{[c]\varphi} = [c^l]\tbi{\varphi}$, if $c\in A^\mathit{bi}$.
\end{itemize}
\end{definition}

\begin{figure}[htb]
\centering
\fbox{
\begin{minipage}{0.8\linewidth}
\phantom{aaaaaaaaaaaaaaaaaaaa}
\begin{center}
$\begin{array}{c@{\hskip 1cm}c@{\hskip 1cm}c} {\begin{array}{c@{\hskip 1.5cm}c@{\hskip 1.5cm}c} \rnode{a}{X}  & \rnode{b}{Y} & \rnode{c}{Z}
\end{array}
\Large \psset{nodesep=3pt}\everypsbox{\scriptstyle} \ncarc{->}{a}{b}\Aput{a} \ncarc{->}{b}{a}\Aput{c} \ncline{->}{c}{b}\Bput{b} }

& {\stackrel{\mathcal{T}}{\longmapsto}} &

{\begin{array}{c@{\hskip 1.5cm}c@{\hskip 1.5cm}c} \rnode{a}{X}  & \rnode{b}{Y} & \rnode{c}{Z}
\end{array}
\Large \psset{nodesep=3pt} \everypsbox{\scriptstyle} \ncarc{->}{a}{b}\Aput{a} \ncarc{->}{b}{a}\Aput{c^r,c^l} \ncline{->}{c}{b}\Bput{b}}
\end{array}
$
\end{center}
\bigskip \caption{The new transformation $\tbi{T}$ of an LTS with bivariant actions into another without them, assuming $A^r=\{a\}$,
$A^l=\{b\}$ and $A^\mathit{bi}=\{c\}$.}\label{fig:2}
\end{minipage}}
\end{figure}

In order to show that $\mathcal{T}$ preserves and reflects the cc-simulation preorder, we compare $\tbi{T}$ with $\mathcal{T}_0(T)$ and we prove a more general result.

\begin{definition}
Given a signature $\{A^r,A^l,\emptyset\}$ and $c^l\in A^l$ we define the transformation $\mathcal{T}^+_{c^l}$ as that which given an LTS $T$ with that signature adds a new state $u$ whose behaviour is that defined by $\omega$, and a new transition labelled by $c^l$ from each state of $T$ to $u$.
\end{definition}

\begin{proposition}\label{prop:t+}
$\mathcal{T}^+_{c^l}$ preserves and reflects the cc-simulation preorder when applied to a system that does not contain any $c^l$ transition.
\end{proposition}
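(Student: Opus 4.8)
The plan is to show that $\mathcal{T}^+_{c^l}$ both preserves the preorder (if $p \cclt q$ in $T$ then the images are related in $\mathcal{T}^+_{c^l}(T)$) and reflects it (the converse). Throughout I write $T = (\Proc, A^l \uplus A^r, \der{})$ for the original LTS, which by hypothesis has no $c^l$-transitions, and $\mathcal{T}^+_{c^l}(T)$ for its image, whose states are $\Proc \cup \{u\}$, with the extra $c^l$-transitions $p \der{c^l} u$ for every $p \in \Proc$, and with $u$ behaving like $\omega$ (i.e. $u \der{b} u$ for all $b \in A^l$ and no other transitions out of $u$). Note that $u$ is the least element of $\mathcal{T}^+_{c^l}(T)$ under $\cclt$, exactly as $\omega$ is for process terms.

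For \textbf{preservation}, suppose $R$ is a covariant-contravariant simulation on $T$ with $p \mathrel{R} q$. I claim $R' = R \cup \{(u,r) \mid r \in \Proc \cup \{u\}\}$ is a cc-simulation on $\mathcal{T}^+_{c^l}(T)$. The pairs with first component $u$ are fine since $u$ behaves like $\omega$, which is below everything: for $a \in A^r$, $u$ has no $a$-transitions, so the covariant clause is vacuous; for $b \in A^l$ and a transition $r \der{b} r'$, we need $u \der{b} u'$ with $u' \mathrel{R'} r'$, and indeed $u \der{b} u$ with $(u, r') \in R'$. For a pair $(p,q) \in R$: the covariant clause for $a \in A^r$ is inherited from $R$ (no new $a$-transitions were added, as the added transitions are all labelled $c^l \in A^l$), and the target pairs lie in $R \subseteq R'$. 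For the contravariant clause, take $q \der{b} q'$ in $\mathcal{T}^+_{c^l}(T)$ with $b \in A^l$. Either this is an old transition of $T$, in which case $R$ supplies $p \der{b} p'$ with $p' \mathrel{R} q'$; or it is a new transition, which forces $b = c^l$, $q' = u$, and then $p \der{c^l} u$ with $(u, u) \in R'$. So $R'$ is a cc-simulation and $p \ccsim q$ in the image.

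For \textbf{reflection}, suppose $R$ is a cc-simulation on $\mathcal{T}^+_{c^l}(T)$ and $p \mathrel{R} q$ with $p, q \in \Proc$. The plan is to observe first that $R$ cannot relate a state of $\Proc$ to $u$ on the left: if $p \mathrel{R} u$ with $p \in \Proc$ then, taking any old $c^l$-transition is impossible (there are none in $T$), but we can instead use that... actually the cleanest route is: restrict $R$ to $\Proc \times \Proc$, i.e. set $R_0 = R \cap (\Proc \times \Proc)$, and check $R_0$ is a cc-simulation on $T$. Take $(p,q) \in R_0$. For $a \in A^r$ and $p \der{a} p'$ in $T$: this is also a transition in $\mathcal{T}^+_{c^l}(T)$, so $R$ gives $q \der{a} q'$ with $p' \mathrel{R} q'$; since $a \in A^r$ and the only transitions into $u$ are labelled $c^l \in A^l$, we have $q' \neq u$, hence $q' \in \Proc$ and $(p', q') \in R_0$. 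For $b \in A^l$ and $q \der{b} q'$ in $T$: again this is a transition in the image, so $R$ gives $p \der{b} p'$ with $p' \mathrel{R} q'$; here $q' \in \Proc$ already, but $p'$ could a priori be $u$ if $b = c^l$ — however $T$ has no $c^l$-transitions, so $q \der{b} q'$ with $b = c^l$ never occurs, and for $b \neq c^l$ the transition $p \der{b} p'$ in $\mathcal{T}^+_{c^l}(T)$ must be an old one (new transitions are labelled $c^l$), so $p' \in \Proc$ and $(p', q') \in R_0$. Thus $R_0$ witnesses $p \ccsim q$ in $T$.

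The main obstacle — really the only subtle point — is the bookkeeping around the state $u$ and the label $c^l$: one must repeatedly invoke that $T$ has \emph{no} $c^l$-transitions (so the contravariant clause never produces a $c^l$-move to chase) and that \emph{all} added transitions carry the label $c^l \in A^l$ (so covariant moves and any non-$c^l$ move stay inside the old system and away from $u$). Once these two facts are used systematically, both directions reduce to the routine verification that the exhibited relations ($R'$ and $R_0$) satisfy the two clauses of Definition~\ref{Def:CCsim}. Since $\ccsim$ is the union of all cc-simulations, exhibiting these witnessing relations gives exactly $p \ccsim q$ in $T$ iff $p \ccsim q$ in $\mathcal{T}^+_{c^l}(T)$, which is the claim.
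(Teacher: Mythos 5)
Your proof is correct and follows essentially the same route as the paper's (much terser) argument: the paper observes that $R$ is a cc-simulation in $T$ iff $R\cup\{(u,u)\}$ is one in $\mathcal{T}^+_{c^l}(T)$, using exactly your two key facts that non-$c^l$ transitions out of states other than $u$ are unchanged and that every such state gains the transition to $u$. Your version merely spells out the bookkeeping (and, for reflection, makes explicit the restriction of an arbitrary witnessing simulation to $\Proc\times\Proc$, which the paper leaves implicit), so no further comment is needed.
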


\begin{proof}
We will see that $R$ is a cc-simulation in $T$ if and only if $R\cup\{(u,u)\}$ is a cc-simulation in $\tabi{T}$. The result is immediate by simply observing that for $a$-transitions, with $a\neq c^l$, the leaving of any state $p$ with $p\neq u$ are exactly the same in $T$ and $\tabi{T}$, while for any such state we always have $p\der{c^l}u$ in $\tabi{T}$.\qed
\end{proof}

\begin{corollary}\label{cor:t+}
Let $T$ be an LTS with signature $\{A^r,A^l,A^\mathit{bi}\}$. Then, for any two states $p$ and $q$ of $T$, we have $p\ccsim q$ in $\tbi{T}$ if and only if ${{p}\ccsim {q}}$ in ${\mathcal{T}_0(T)}$.
\end{corollary}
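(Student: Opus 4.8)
The plan is to turn $\mathcal{T}_0(T)$ into $\tbi{T}$ by a short chain of modifications, each of which preserves and reflects $\ccsim$. First I would dispose of the cosmetic mismatch of action names. Comparing Definitions~\ref{def:cm} and~\ref{Def:Bi2nBi}, if in $\tbi{T}$ we rename every $a\in A^r$ to $a^r$ and every $b\in A^l$ to $b^l$ (leaving the already primed labels $c^r,c^l$ for $c\in A^\mathit{bi}$ untouched), we get an LTS $T_1$ over the covariant alphabet $\{d^r\mid d\in A^r\cup A^\mathit{bi}\}$ and the contravariant alphabet $\{b^l\mid b\in A^l\}\cup\{c^l\mid c\in A^\mathit{bi}\}$; this is a modality-preserving bijection, so $p\ccsim q$ in $\tbi{T}$ iff $p\ccsim q$ in $T_1$. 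Inspecting Definition~\ref{def:cm}, $\mathcal{T}_0(T)$ has exactly the transitions of $T_1$, plus: (i) the new state $u$, the self-loops $u\der{d^l}u$ for $d\in A$, and the transitions $p\der{a^l}u$ for every $a\in A^r$ and every state $p$ of $T$; and (ii) for each transition $p\der{a}p'$ of $T$ with $a\in A^r$, an extra contravariant transition $p\der{a^l}p'$. The only fresh contravariant labels involved are those in $C:=\{a^l\mid a\in A^r\}$, over which $T_1$ has no transitions, so enlarging the alphabet of $T_1$ by $C$ changes nothing (the contravariant clause for those labels is vacuous).

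Second, I would show that the transitions in (ii) are redundant: the LTS $\mathcal{T}_0'(T)$ obtained from $\mathcal{T}_0(T)$ by deleting all of them has the same cc-simulation preorder (as a relation over $\St\cup\{u\}$, hence in particular over the states of $T$). The key point is that in both LTSs the state $u$ has no outgoing covariant transition and a self-loop on every contravariant action, so it behaves like $\omega$ and is $\ccsim$-least. I would check that $\ccsim^{\mathcal{T}_0'(T)}$ is a cc-simulation on $\mathcal{T}_0(T)$ and, conversely, that $\ccsim^{\mathcal{T}_0(T)}$ is one on $\mathcal{T}_0'(T)$: the covariant clauses are immediate, since (ii) removes only contravariant transitions; for the contravariant clause, whenever a challenge $y\der{a^l}y'$ must be answered from a state $x$, either a matching transition already survives in $\mathcal{T}_0'(T)$, or the challenge is one of the transitions in (ii), in which case $y\der{a}y'$ is a transition of $T$ and the transition $x\der{a^l}u$ is present in both LTSs, with $u$ matching $y'$ because $u$ is least.

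Third, $\mathcal{T}_0'(T)$ is precisely the LTS obtained from $T_1$ (with alphabet enlarged by $C$ as above) by adding a fresh $\omega$-state $u$ together with a $c$-transition from every old state to $u$, for each $c\in C$. This is the transformation of Proposition~\ref{prop:t+}, carried out once for each label in the finite set $C$ --- the $\omega$-states introduced along the way are pairwise $\ccsim$-equivalent, being all least, and may be identified --- or, more cleanly, the obvious generalization of Proposition~\ref{prop:t+} from a single contravariant label to a finite set of them, whose proof is word-for-word the same: $R$ is a cc-simulation in the source iff $R\cup\{(u,u)\}$ is a cc-simulation in the target. Since $T_1$ has no $C$-transitions, this step preserves and reflects $\ccsim$. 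Composing the three steps gives $p\ccsim q$ in $\tbi{T}$ iff $p\ccsim q$ in $\mathcal{T}_0(T)$, which is the statement of the corollary.

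I expect the second step to be the only real content: one must notice that the extra contravariant transitions (ii) introduced by $\mathcal{T}_0$ --- which have no counterpart in the more economical $\tbi{T}$ --- are harmless precisely because $\mathcal{T}_0$ also adds the $\omega$-bottom $u$, and then convert this into the small cc-simulation argument sketched above. The first step is a modality-preserving renaming of actions, and the third is a verbatim generalization of Proposition~\ref{prop:t+}; both are routine bookkeeping.
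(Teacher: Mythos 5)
Your proof is correct and follows essentially the same route as the paper's: rename the unprimed actions, apply Proposition~\ref{prop:t+} once per label in $\{a^l\mid a\in A^r\}$, and absorb the extra $a^l$-transitions paired with $a^r$-transitions using the ubiquitous $p\der{a^l}u$ with $u$ being $\ccsim$-least. You merely reorder the steps (stripping $\mathcal{T}_0(T)$ down rather than building $\tbi{T}$ up) and are somewhat more explicit than the paper about the renaming of $A^l$-actions and the identification of the repeatedly introduced $\omega$-states.
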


\begin{proof}
Note that $\tbi{T}$ is a $\{\bar{A}^r,\bar{A}^l,\emptyset\}$-LTS, while $\mathcal{T}_0(T)$ is an $\{\hat{A}^r,\hat{A}^l,\emptyset\}$-LTS,
where $\hat{A}^r=\{a^r\mid a\in A^r\cup A^\mathit{bi}\}$ and $\hat{A}^l=\bar{A}^l\cup\{a^l\mid a\in A^r\}$. This means that we can also see 
$\tbi{T}$ as an $\{\hat{A}^r,\hat{A}^l,\emptyset\}$-LTS if we rename each $a\in A^r$ into the corresponding $a^r\in\hat{A}^r$. Then, we can apply 
$\mathcal{T}^+_{a^l}$ for each $a\in A^r$ in a row, thus getting a transformed system $\mathcal{T}^+(T)$. All along these applications we are 
under the hypothesis of Proposition~\ref{prop:t+}. Moreover, the only differences between $\mathcal{T}^+(T)$ and $\mathcal{T}_0(T)$ are the 
collection of $a^l$-transitions paired with the $a^r$-transitions in $T$, with $a\in A^r$. But since for any state $p$ of $\mathcal{T}^+(T)$ we 
have $p\der{a^l}u$, for all $a^l\in\{a^l\mid a^r\in A^r\}$, we immediately conclude that the identity is a cc-simulation in both directions (up-to the indicating renaming) between the states of $\mathcal{T}^+(T)$ and those in $\mathcal{T}_0(T)$, from which we finally obtain that 
$p\ccsim q$ in $\tbi{T}$ iff $p\ccsim q$ in $\mathcal{T}_0(T)$.\qed
\end{proof}

\begin{corollary}
Our transformation $\mathcal{T}$ preserves and reflects the cc-simulation preorder, that is, for each LTS $T$ and for all states $p$ and $q$ in $T$, it holds that $p\ccsim q$ in $T$ if, and only, if $p\ccsim q$ in $\tbi{T}$.
\end{corollary}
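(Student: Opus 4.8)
The plan is to obtain this corollary by chaining two facts that are already available. First, recall from~\cite{AcetoEtAl11} that the transformation $\mathcal{T}_0=\calC\com\calM$ preserves and reflects the cc-simulation preorder: both $\calM$ and $\calC$ have this property, and the property is manifestly closed under composition, so $\mathcal{T}_0$ has it too. Hence, for every LTS $T$ with signature $\{A^r,A^l,A^\mathit{bi}\}$ and for all states $p,q$ of $T$, we have $p\ccsim q$ in $T$ if and only if $p\ccsim q$ in $\mathcal{T}_0(T)$ (up to the renaming of actions that $\mathcal{T}_0$ performs).

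Second, Corollary~\ref{cor:t+} states that, for any two states $p,q$ of $T$, we have $p\ccsim q$ in $\tbi{T}$ if and only if $p\ccsim q$ in $\mathcal{T}_0(T)$. Composing these two biconditionals immediately yields that $p\ccsim q$ in $T$ if and only if $p\ccsim q$ in $\tbi{T}$, which is exactly the statement to be proved. So the proof is essentially a one-line transitive combination of the two previous results.

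The only point that requires a little care is the bookkeeping about states and action labels: $\tbi{T}$ has the same state set as $T$, whereas $\mathcal{T}_0(T)$ adds the extra state $u$; but since $p$ and $q$ are states of $T$, they occur in all three systems, and both cited equivalences are precisely about such shared states (with the action renamings built into each transformation). I do not expect any genuine obstacle here: the substantive work has already been carried out in Proposition~\ref{prop:t+} and Corollary~\ref{cor:t+}, and in~\cite{AcetoEtAl11}, so this corollary only needs to record the conclusion of that chain.
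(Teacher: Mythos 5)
Your proposal is correct and follows essentially the same route as the paper: the paper's own proof also just chains the known fact that $\mathcal{T}_0$ preserves and reflects $\ccsim$ (established via Proposition~\ref{prop:t+} and the results of~\cite{AcetoEtAl11}) with Corollary~\ref{cor:t+}, passing through $\mathcal{T}_0(T)$ exactly as you do. Your remarks on the bookkeeping of states and renamings are a welcome extra precision but do not change the argument.
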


\begin{proof}
We just need to combine Proposition~\ref{prop:t+} and Corollary~\ref{cor:t+}.\qed
\end{proof}

\begin{proposition}
$\mathcal{T}$ preserves and reflects the cc-logic, that is, for each LTS $T$, any state $p$ and all covariant-contravariant formula $\varphi$ in $T$, it holds that $p\models\varphi$ in $T$ if, and only if, $p\models\tbi{\varphi}$ in $\tbi{T}$.
\end{proposition}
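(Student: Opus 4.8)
The statement asserts that the transformation $\mathcal{T}$ both preserves and reflects satisfaction of cc-formulae: $p \models \varphi$ in $T$ iff $p \models \tbi{\varphi}$ in $\tbi{T}$. The plan is to proceed by structural induction on the formula $\varphi$. The Boolean cases ($\bot$, $\top$, $\land$, $\lor$) are immediate from the definition of $\mathcal{T}$ on formulae, since $\mathcal{T}$ commutes with all the propositional connectives and the states of $\tbi{T}$ coincide with those of $T$. So the real content lies in the four modal cases, split according to whether the action is in $A^r$, $A^l$, or $A^\mathit{bi}$.

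\textbf{Key steps.} The crucial observation, which I would isolate as a preliminary remark before the induction, is the exact correspondence between transitions of $T$ and transitions of $\tbi{T}$ established by Definition~\ref{Def:Bi2nBi}: for $a \in A^r$, we have $p \der{a} p'$ in $T$ iff $p \der{a} p'$ in $\tbi{T}$; for $b \in A^l$, likewise $p \der{b} p'$ in $T$ iff $p \der{b} p'$ in $\tbi{T}$; and for $c \in A^\mathit{bi}$, we have $p \der{c} p'$ in $T$ iff $p \der{c^r} p'$ in $\tbi{T}$ iff $p \der{c^l} p'$ in $\tbi{T}$ (the two new transitions are added as a coupled pair, so the $c^r$- and $c^l$-successor sets of any state in $\tbi{T}$ are both equal to the $c$-successor set of that state in $T$). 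Given this, each modal case unwinds mechanically: for instance, $p \models \langle c \rangle \varphi$ in $T$ iff there is $p \der{c} p'$ with $p' \models \varphi$ in $T$; by the induction hypothesis this holds iff there is such a $p'$ with $p' \models \tbi{\varphi}$ in $\tbi{T}$; and by the transition correspondence this holds iff there is $p \der{c^r} p'$ in $\tbi{T}$ with $p' \models \tbi{\varphi}$, i.e.\ iff $p \models \langle c^r \rangle \tbi{\varphi} = \tbi{\langle c\rangle\varphi}$ in $\tbi{T}$. The cases for $\langle a \rangle$ with $a \in A^r$, $[b]$ with $b \in A^l$, and $[c]$ with $c \in A^\mathit{bi}$ are handled identically, using the appropriate clause of the transition correspondence and, for the box modalities, quantifying universally over successors rather than existentially.

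\textbf{Main obstacle.} Honestly, there is no deep obstacle here: once the transition correspondence of Definition~\ref{Def:Bi2nBi} is spelled out, the induction is routine bookkeeping. The one point that deserves a moment of care is the bivariant box case $[c]\varphi$ with $c \in A^\mathit{bi}$: one must check that it is legitimate to test $p' \models \tbi{\varphi}$ over the $c^l$-successors (as $\tbi{[c]\varphi} = [c^l]\tbi{\varphi}$ dictates) rather than the $c^r$-successors, and this is fine precisely because those two successor sets coincide in $\tbi{T}$ by construction — so either choice yields the same set of $p'$, namely the $c$-successors of $p$ in $T$. A secondary remark worth making is that this proof is essentially self-contained and does not even require the preceding propositions on preservation of the cc-simulation preorder; those were needed for the semantic side of the correspondence, whereas the logical side falls out directly from the transition-level correspondence built into Definition~\ref{Def:Bi2nBi}.
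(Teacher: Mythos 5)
Your proof is correct, but it takes a genuinely different route from the paper's. You argue directly by structural induction on $\varphi$, grounding everything in the exact transition-level correspondence between $T$ and $\tbi{T}$ (labels in $A^r\cup A^l$ are carried over unchanged, and the $c^r$- and $c^l$-successor sets in $\tbi{T}$ both coincide with the $c$-successor set in $T$ for $c\in A^\mathit{bi}$); this makes the argument entirely self-contained, and your observation that the result is independent of the preceding propositions on the cc-simulation preorder is accurate. The paper instead derives the proposition from the previously published preservation/reflection result for the composite transformation $\mathcal{T}_0=\calC\com\calM$: it notes that $\mathcal{T}_0$ acts on formulae exactly as $\mathcal{T}$ does (up to renaming each $a\in A^r$ into $a^r$), and that the only differences between $\tbi{T}$ and $\mathcal{T}_0(T)$ --- the extra $a^l$-transitions for $a\in A^r$ and the added state $u$ --- play no role in the satisfaction of any translated formula, since such formulae never mention $a^l$ for $a\in A^r$. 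What your approach buys is independence from the external reference and a fully explicit verification of each modal case, including the one point of genuine care you correctly flag (that $\tbi{[c]\varphi}=[c^l]\tbi{\varphi}$ is legitimate because $c^l$- and $c^r$-successors coincide by construction); what the paper's approach buys is brevity and reuse of an already-proved result, at the cost of relying on the reader's familiarity with the earlier transformation $\mathcal{T}_0$.
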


\begin{proof}
We proved in~\cite{AcetoEtAl11} the corresponding result for $\mathcal{T}_0$ and the transformation $\mathcal{T}_0$ which is defined on logic formulae exactly as $\mathcal{T}$, but renaming again each $a\in A^r$ into $a^r$. From the definitions of $\mathcal{T}$ and $\mathcal{T}_0$ we immediately conclude that $a^l$-transitions with $a\in A^r$ do not play any role in the satisfaction of any formula $\tbi{\varphi}$, and then the result follows from that proved in~\cite{AcetoEtAl11}.\qed
\end{proof}

After the representation of a bivariant action $c\in A^\mathit{bi}$ as a pair $(c^r,c^l)$ with $c^r\in\bar{A}^r$ and $c^l\in\bar{A}^l$, we have that $c^l$ under-approximates $c$, whereas $c^r$ over-approximates $c$. This means in particular that we have $c^l0\ccsim c^l0+c^r0\ccsim c^r0$ and, more generally, $c^lp\ccsim c^lp+c^rq\ccsim c^rq$, for all processes $p$ and $q$. Therefore, once we have separated the covariant and contravariant characters of bivariant actions we achieve a greater flexibility which allows us to consider ``non-balanced'' processes where these two characters do not go always together, thus producing over and under-approximations when needed.

\begin{discussion}
It is interesting to compare our new transformation $\mathcal{T}$ with the original transformation $\mathcal{T}_0$ from~\cite{AcetoEtAl11}. The first aims to obtain a representation over the signature $\{\bar{A}^r,\bar{A}^l,\emptyset\}$ that is as simple as possible, and this is why we do not introduce $a^l$ when $a\in A^r$. Instead, we can see the result of the transformation $\mathcal{T}_0$ as a process in the ``uniform'' signature $\tilde{A}=\{\tilde{A}^r,\tilde{A}^l,\emptyset\}$, with $\tilde{A}^r=\{a^r\mid a\in A^r\cup A^l\cup A^\mathit{bi}\}$ and $\tilde{A}^l=\{a^l\mid a\in A^r\cup A^l\cup A^\mathit{bi}\}$. It is true that the actions $b^r$ with $b\in A^l$ do not appear in $\mathcal{T}_0(T)$, but even so we can consider any $\mathcal{T}_0(T)$ as a process for $\tilde{A}$. Obviously, this is also the case for $\tbi{T}$, where the actions $a^l$ with $a\in A^r$ do not appear either. Both $\mathcal{T}_0(T)$ and $\tbi{T}$ were ``good'' representations of $T$, as stated above, however it is clear that we do not have $\mathcal{T}_0(T)\equiv_{cc}\tbi{T}$. Instead, $\mathcal{T}_0(T)\ccsim\tbi{T}$, and in fact $\mathcal{T}_0(T)$ is the least process with respect to $\ccsim$, for the uniform signature $\tilde{A}$ that has the good properties stated in the paper. Note that, instead, $b^r$-transitions for $b\in A^l$ do not need to be introduced at all, since any addition of a covariant transitions produces a $\ccsim$-greater process.

Therefore, the original transformation $\mathcal{T}_0$ would be indeed the adequate one if we wanted to obtain an embedding of the class of processes for any signature into that corresponding to the uniform signature $\tilde{A}$ defined above, where all the actions can be interpreted as the covariant and contravariant parts of the actions in a set $A$.
\end{discussion}

To conclude the section we explore the set of systems for any signature $\bar{A}=\{\bar{A}^r,\bar{A}^l,\emptyset\}$. Some of them, but not all, are equivalent to the representation
of a system for the original alphabet $A$. Whenever that is not the case we would need to remove (or add) some transitions labelled by the created
actions in $\{c^r,c^l\mid c\in A^\mathit{bi}\}$ in order to obtain a system that is equivalent to the representation of some process. In the following proposition we give an algorithm for obtaining a system for the original signature $A$ to which a given system for the signature $\bar{A}$ is equivalent, whenever such a system exists. To make possible a proof by (structural) induction, we will only present the result for process terms in $\mathcal{P}$.

\begin{proposition}
Let $A=\{A^r,A^l,A^\mathit{bi}\}$ be a signature and $\bar{A}=\{\bar{A}^r,\bar{A}^l,\emptyset\}$ be the associated signature without bivariant actions. Let
$p,q\in\mathcal{P}$ be process terms for $\bar{A}$ such that $q$ is the representation of some process for the signature $A$. Let us assume that $p\equiv_{cc}q$. Then it is possible to transform $p$ into the representation $p_{bi}$ of some process term for $A$, simply by adding or
removing some transitions labelled by actions in $\{c^r,c^l\mid c\in A^\mathit{bi}\}$.
\end{proposition}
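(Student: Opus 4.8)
The plan is to argue by induction on the size of $p$, using the hypothesis both to guide the construction of $p_{bi}$ and to discard the degenerate cases. First, if $A^\mathit{bi}=\emptyset$ then $\bar{A}=A$, the balance condition ``$p\der{c^r}p'$ iff $p\der{c^l}p'$'' is vacuous, and $p_{bi}:=p$ works, so assume $A^\mathit{bi}\neq\emptyset$. Note that what has to be achieved is that $p_{bi}$ is balanced (hence the representation of some process term for $A$), is obtained from $p$ by adding or removing transitions in $\{c^r,c^l\mid c\in A^\mathit{bi}\}$ only, \emph{and} satisfies $p_{bi}\equiv_{cc}p$; this last requirement, implicit in ``equivalent to the representation of some process'', is what makes the statement non-trivial and is where the witness $q$ is used.

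I would first dispose of the leaves of the induction where the hypothesis is in fact vacuous. If $p=\omega$ there is no balanced $q$ with $\omega\equiv_{cc}q$: from $q\ccsim\omega$ and $\omega\der{c^l}\omega$ (any $c\in A^\mathit{bi}$) one gets $q\der{c^l}q'$, hence $q\der{c^r}q'$ by balance of $q$, and then $q\ccsim\omega$ fails for the covariant action $c^r$, since $\omega$ has no $c^r$-transition. Similarly, if $p=c^r.p_1$ or $p=c^l.p_1$ with $c\in A^\mathit{bi}$, no balanced $q\equiv_{cc}p$ exists: for $p=c^r.p_1$, since $p$ has no transition with a label in $\bar{A}^l$, neither has $q$ (else $p\ccsim q$ fails), so $q$ has no $c^l$-transition and hence, by balance, no $c^r$-transition at its root, contradicting that $p\ccsim q$ must match $p\der{c^r}p_1$; the case $p=c^l.p_1$ is dual. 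The case $p=0$ is immediate with $p_{bi}:=0$. For $p=a.p_1$ with $a\in A^r\cup A^l$, the root of $q$ must, by $p\equiv_{cc}q$, have an $a$-transition, and one checks that every $a$-successor $q_1$ of $q$ satisfies $p_1\equiv_{cc}q_1$ (with $q_1$ balanced because $q$ is); applying the induction hypothesis to $p_1$ with witness $q_1$ gives a balanced $p_{1,bi}\equiv_{cc}p_1$ obtained by bivariant edits, and $p_{bi}:=a.p_{1,bi}$ is as required, $a$ being non-bivariant imposing no balance obligation at the root.

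The substantial case is $p=\sum_{i\in I}a_i.p_i$ with several summands. Here the construction proceeds in two stages: first repair the root, then recurse into the surviving successors. To repair the root, for each $c\in A^\mathit{bi}$ one deletes every unpaired transition $p\der{c^r}p'$ (one with $p\not\der{c^l}p'$) and, for every unpaired transition $p\der{c^l}p'$, either deletes it (when $p'$ is $\ccsim$-dominated by another $c^l$-target of the root, so the transition is redundant) or else adds the partner $p\der{c^r}p'$, the aim being to make the set of $c^r$-targets and the set of $c^l$-targets of the root coincide while staying in the $\ccsim$-class of $q$. Deleting an unpaired $c^r$-transition is sound because, using $p\equiv_{cc}q$ and balance of $q$, its target is $\ccsim$-dominated by a retained $c^r$-target; adding $p\der{c^r}p'$ in the non-redundant branch is sound because then $p'$ is $\ccsim$-minimal among the $c^l$-targets, hence $\equiv_{cc}$ to a $c^l$-successor of $q$ and so, by balance of $q$, to a $c^r$-successor of $q$, which matches the new covariant transition; and $c^l p'\ccsim c^l p'+c^r p'$, recorded in the paper, shows the new transition does not make the process strictly smaller. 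After the root is balanced, one extracts for every surviving successor $p_i$ a state $q_j$ of $q$ with $p_i\equiv_{cc}q_j$ — possible because, once the bivariant successors of the repaired root and those of $q$ are matched on both sides, such a $q_j$ is forced — so that, $q_j$ being balanced, the induction hypothesis applies to the smaller term $p_i$.

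The main obstacle is precisely the soundness of the root repair. The covariant and contravariant clauses of $\ccsim$ only ever yield a successor that \emph{dominates} a given one, never a direct comparison between two given successors, so turning ``$q$ has a $c^l$-successor above $p'$'' into ``some retained or added $c^r$-target of $p$ is above $p'$'', and justifying the choice between the ``delete'' and ``add'' branches, requires chasing simulation chains and using that the LTSs involved are finite, so the chains stabilise into $\ccsim$-equivalence cycles. A cleaner route, which I would try first, is to replace the arbitrary witness $q$ by the canonical one $\theta(\chi(q))$: it is $\equiv_{cc}q$, it is balanced (the $\may{c^r}$- and $\must{c^l}$-conjuncts of $\chi(q)$ range over the same set of successors of $q$), and it has the rigid unary strong normal form shape of Theorem~\ref{thm:unarynormalform}, against which the matching of the bivariant successors of $p$ — and hence the bookkeeping in the recursive step — becomes essentially syntactic.
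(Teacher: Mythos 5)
Your overall strategy coincides with the paper's: structural induction on $p$, a repair of the bivariant transitions at the root of $p$ justified through the balanced witness $q$, and then recursion into successors of $p$ that have been matched, up to $\equiv_{cc}$, with successors of $q$. Your treatment of the base cases is more explicit than the paper's (and your observation that the hypothesis is vacuous for $p=\omega$ and for a single bivariant prefix is a defensible reading, where the paper simply sets $p_{bi}=\omega$). One local slip: for $p=a.p_1$ it is not true that \emph{every} $a$-successor $q_1$ of $q$ satisfies $p_1\equiv_{cc}q_1$; one only gets $q_1\ccsim p_1$ for all of them and $p_1\ccsim q_1$ for at least one, but the existence of one such $q_1$ is all your argument needs.

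The genuine gap is the one you flag yourself and do not close: the soundness of the root repair, i.e.\ passing from ``$q$ has a $c^r$- or $c^l$-successor related to $p'$'' to ``a retained or added summand of $p$ is $\equiv_{cc}$ to it''. The simulation clauses only produce domination chains $p'\ccsim q'\ccsim p''\ccsim\cdots$, and you leave their stabilisation as an obstacle with two sketched escape routes. The paper resolves exactly this point with a preliminary normalisation that your proposal is missing: using $ap'+aq'\equiv_{cc}ap'$ when $q'\ccsim p'$ (and its contravariant dual), first delete from every subterm of $p$ all non-$\ccsim$-maximal $a$-summands for $a\in\bar{A}^r$ and all non-$\ccsim$-minimal $b$-summands for $b\in\bar{A}^l$. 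After this, for a surviving $a$-summand $ap''$ one gets $q\der{a}q''$ with $p''\ccsim q''$ from $p\ccsim q$, then $p\der{a}p'''$ with $q''\ccsim p'''$ from $q\ccsim p$, and maximality forces $p'''=p''$ up to $\equiv_{cc}$, hence $p''\equiv_{cc}q''$. This single observation replaces your chain-chasing, makes the addition of the $c^l$-partner of each $c^r$-summand sound (one checks $q\ccsim p+c^lp''$ using the balance of $q$ at $q''$), and simultaneously produces the $\equiv_{cc}$-matched successors needed to apply the induction hypothesis. Your fallback of replacing $q$ by the canonical witness $\theta(\chi(q))$ is not wrong ($\theta(\chi(q))$ is indeed balanced), but it only tidies the $q$-side; the difficulty lives on the $p$-side, among the redundant summands of the arbitrary term $p$, and is what the maximality/minimality pruning is there to eliminate.
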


\begin{proof}
The proof is done by structural induction.
\begin{itemize}
\item If $p=0$ or $p=\omega$ we can take $p_{bi}=p$.

\item In the general case, we exploit the fact that whenever $a\in\bar{A}^r$, if $q'\ccsim p'$ then $ap'+aq'\equiv_{cc} ap'$ (and dually, when $b\in\bar{A}^l$, $bp'+bq'\equiv_{cc} bq'$). This means that from any term for $\bar{A}$ we can remove all the summands $aq''$ (resp. $bp''$) such that $ap''$ is not a maximal $a$-summand of $p'$ with respect to $\ccsim$ (resp. $bp''$ is not a minimal $a$-summand), obtaining a $\equiv_{cc}$-equivalent process. So, we start by removing all the non-maximal $a$-summands with $a\in\bar{A}^r$, and all the non-minimal $b$-summands with $b\in\bar{A}^l$ of any subterm of $p$. By abuse of notation, we will still denote the obtained process by $p$, and we still have $p\equiv_{cc}q$.

Now, for any $a$-summand of $p$ with $a\in\bar{A}^r$, $p=p'+ap''$, there is some $q\der{a}q''$ with $p''\ccsim q''$. But also, since $p\equiv_{cc}q$, starting with $q\der{a}q''$ there must exist some $p\der{a}p'''$ with $q''\ccsim p'''$, but then $p''\ccsim p'''$, and since $p''$ was maximal we can assume that $p'''=p''$, and then we also have $p''\equiv_{cc}q''$. The same is true for all the $b$-summands with $b\in\bar{A}^l$, and this means that we can apply the induction hypothesis to all the derivatives of $p$.

Moreover, for each $ap'$ summand with $a=c^r$ we can add to $p$ the summand $c^lp'$ and we obtain $p\equiv_{cc} p+c^lp'$. Indeed, we have trivially $p+c^lp'\ccsim p$, and to prove that $p\ccsim p+c^lp'$ we check $q\ccsim p+c^lp'$. We only need to see that for any transition $p+c^lp'\der{c^l} p'$ there is some $q\der{c^l}q'$ with $q'\ccsim p'$. We use again the maximality of the summand $c^rp'$ and we obtain, as above, that there is some $c^rq'$ summand of $q$ with $q'\ccsim p'$. But since $q$ was the representation of some process for $A$, it has also a summand $c^lq'$ as required above.

The obtained process has already its $c^r$ and $c^l$ transitions, with $c\in A^\mathit{bi}$, paired at its first level, and then we simply need to apply the induction hypothesis to conclude the proof.\qed
\end{itemize}
\end{proof}

\begin{remark}
\emph{Although the proposition above assumes that the considered process was equivalent to the representation of some process for $A$, it is easy to use it as a decision algorithm to check that property: we apply the algorithm to the given process $p$ and check if the obtained process $p'$ is $\equiv_{cc}$-equivalent to it, if that is not the case then $p$ is not equivalent to the representation of any process for the signature $A$.}

\end{remark}

\section{Conclusions and future work}\label{sec:future}

In~\cite{AcetoEtAl11} we studied the relationships between the notion
of refinement over modal transition systems, and the notions of
covariant-contravariant simulation and partial bisimulation over
labelled transition systems. Here we have continued that work by
looking for the ``graphical'' representation of the
covariant-contravariant modal formulae by means of terms, as it was
done in~\cite{Baetenetal} for the case of modal transition
systems. For technical reasons, we had first to restrict ourselves to
the case in which we have no bivariant actions. Afterwards, we argued
that the general case can, in some sense, be ``reduced'' to the one we
dealt with in Section~\ref{sec:graph} by defining a
semantic-preserving transformation between covariant-contravariant
systems with bivariant actions, and covariant-contravariant systems
without them.

The idea was to separate each bivariant action into its covariant and
its contravariant parts. As a matter of fact, we believe that this
idea might be useful not only for obtaining theoretical results, as we
have done here, but also for applications. Most of the studies on
process algebras and their semantics assume the bivariant behaviour of
all the actions. It is true that in some studies (see for
example~\cite{Lynch88}) we have a classification of actions, as we
have also done in~\cite{AcetoEtAl11} and in this paper. But now we are
proposing to exploit the relationships between the different classes
of actions.

As future work, it would be interesting to obtain a direct
characterization of the formulae that are graphically representable in
a setting with bivariant actions. Such a direct characterization will
also pave the way towards a more general theory of ``graphical
characterizations'' of formulae in modal logics of processes, of which
the result by Boudol and Larsen and ours are special cases.

Of course, one of the directions in which we plan to continue our studies is that related with the logical characterization of the semantics,
and in particular the connections between logical formulae and terms established by characteristic formulae and graphical representations. The
combination of these two frameworks is also an interesting challenge. In particular, we plan some extensions of the recent work by L{\"u}ttgen
and Vogler~\cite{Vogler09,Vogler10} to the case of covariant-contravariant systems.


\end{document}